\setlist[itemize]{noitemsep}
\setlist[description]{noitemsep}
\newtheorem{proposition}{Proposition}
\newtheorem{corollary}{Corollary}
\def\maketag@@@#1{\hbox{\m@th\normalfont\normalsize#1}}
\newcommand{\pushright}[1]{\ifmeasuring@#1\else\omit\hfill$\displaystyle#1$\fi\ignorespaces}
\newcommand{\pushleft}[1]{\ifmeasuring@#1\else\omit$\displaystyle#1$\hfill\fi\ignorespaces}
	\let\start@align@nopar\start@align
	\let\start@gather@nopar\start@gather
	\let\start@multline@nopar\start@multline
	\let\start@numcases@nopar\start@numcases
	\long\def\start@align{\par\vspace{-1em}\start@align@nopar}
	\long\def\start@gather{\par\vspace{-1em}\start@gather@nopar}
	\long\def\start@multline{\par\vspace{-1em}\start@multline@nopar}
	\long\def\start@numcases{\par\vspace{-1em}\start@numcases@nopar}
\journal{Elsevier}
\newcommand{\tabref}[1]{Tab.~\ref{#1}}
\newcommand{\equref}[1]{Eq.~(\ref{#1})}
\newcommand{\figref}[1]{Fig.~\ref{#1}}
\def\del{\partial}
\newcommand{\maxm}{\mathop{\text{max}}\limits}
\def\({\left(}
\def\){\right)}
\def\<{\left[}
\def\>{\right]}
\def\aa{{\bf a}}
\def\AA{{\bf A}}
\def\NN{{\bf N}}
\def\PP{{\bf P}}
\def\mf{{m_+}}
\def\ml{{m_-}}
\def\kj{\kappa}
\def\F{{\rm F}}
\def\C{{\rm C}}
\def\tthh{\bm{\theta}}
\begin{document}
\begin{frontmatter}
	\title{Fundamental diagram estimation by using trajectories of probe vehicles}

	\author[ut]{Toru Seo\corref{mycorrespondingauthor}}
	\cortext[mycorrespondingauthor]{Corresponding author}
	\ead{seo@civil.t.u-tokyo.ac.jp}
	\author[tokyotech]{Yutaka Kawasaki}
	\author[utcsis]{Takahiko Kusakabe}
	\ead{t.kusakabe@csis.u-tokyo.ac.jp}
	\author[tokyotech]{Yasuo Asakura}
	\ead{asakura@plan.cv.titech.ac.jp}
	\address[ut]{The University of Tokyo, 7-3-1 Hongo, Bunkyo-ku, Tokyo 113-8656, Japan}
	\address[tokyotech]{Tokyo Institute of Technology, 2-12-1-M1-20, O-okayama, Meguro, Tokyo 152-8552, Japan}
	%\address[um]{University of Michigan, 2350 Hayward St., Ann Arbor, MI 48109, The United States}
	\address[utcsis]{Center for Spatial Information Science, the University of Tokyo, 5-1-5 Kashiwanoha, Kashiwa-shi, Chiba 277-8568, Japan}
	\begin{abstract}
		The fundamental diagram (FD), also known as the flow--density relation, is one of the most fundamental concepts in the traffic flow theory.
		Conventionally, FDs are estimated by using data collected by detectors.
		However, detectors' installation sites are generally limited due to their high cost, making practical implementation of traffic flow theoretical works difficult.
		On the other hand, probe vehicles can collect spatially continuous data from wide-ranging area, and thus they can be useful sensors for large-scale traffic management.
		In this study, a novel framework of FD estimation by using probe vehicle data is developed.
		It determines FD parameters based on trajectories of randomly sampled vehicles and a given jam density that is easily inferred by other data sources.
		A computational algorithm for estimating a triangular FD based on actual, potentially noisy traffic data obtained by multiple probe vehicles is developed.
		The algorithm was empirically validated by using real-world probe vehicle data on a highway.
		The results suggest that the algorithm accurately and robustly estimates the FD parameters.
	\end{abstract}
	\begin{keyword}
		traffic flow theory; fundamental diagram; flow--density relation; mobile sensing; probe vehicle; connected vehicle
	\end{keyword}
\end{frontmatter}

\section{Introduction}

The {\it fundamental diagram (FD)}, also known as the flow--density relation, is literally one of the most fundamental concepts in the traffic flow theory.
An FD describes the relation between flow and density in {\it steady traffic} (sometimes referred as equilibrium or stationary traffic), that is, {\it traffic in which all the vehicles exhibit the same constant speed and spacing} \citep{daganzo1997book, Treiber2013traffic}.
In theory, an FD itself contains useful information on traffic features, such as the value of free-flow speed and flow capacity, and distinction between free-flow and congested regimes.
Observational studies also indicated a clear relation between flow and density in actual traffic at near-steady states \citep{Cassidy1998bivariate, Yan2018stationary}.
Additionally, macroscopic traffic flow dynamics are modeled by combining an FD and other principles---the most widely known example is the Lighthill--Whitham--Richards (LWR) model \citep{Lighthill1955LWR, Richards1956LWR}.
Furthermore, FDs explain microscopic vehicle behavior to a certain extent \citep{newell2002carfollowing, Duret2008fd, Jabari2014fd}.
Thus, FDs are utilized by a wide variety of academic and practical purposes in traffic science and engineering fields such as traffic flow modeling/analysis/simulation \citep{Daganzo1994ctm, daganzo1997book, daganzo2006kinematic, Treiber2013traffic}, traffic control \citep{Papageorgiou2003control}, traffic state estimation \citep{Seo2017review}, and dynamic traffic assignment in network \citep{Szeto2006dta}.

In order to estimate a parametric FD in actual traffic, it is necessary to collect data from traffic, assume the functional form of its FD, and estimate the FD parameters by fitting the curve to the data.
Typically, FDs are estimated by using roadside sensors (e.g., cameras and detectors) from the era of \citet{Greenshields1935fd}.
This is a straightforward method because usual roadside sensors measure traffic count and occupancy that are closely related to flow and density, respectively, at their location.

The limitation of the roadside sensor-based FD estimation is evident: it is unable to estimate FDs where sensors are not installed.
Furthermore, because the installation and operational costs of roadside sensors are generally high, it is not practically possible to deploy detectors to everywhere.
Thus, it is not always possible to determine FDs on every roads, especially arterial roads and highways in developing countries.
It is also difficult to identify the exact locations and characteristics of bottlenecks even on freeways with some sensors.
This limitation of detectors poses substantial difficulties for practical implementations of the aforementioned traffic science and engineering works that require pre-determined FDs.

{\it Probe vehicles}\footnotemark{} \citep{sanwal1995probe, Zito1995gps} received increasing attention currently, because of the rise of mobile technology and connected vehicles \citep{Herrera2010probe, Shladover2017autonomous}.
\footnotetext{%{}
	In this study, the term ``probe vehicle'' denotes a vehicle that continuously measures and reports its position and time (i.e., spatiotemporal trajectory) by using global navigation satellite systems, mobile phones, connected vehicle technologies, or other similar on-vehicle systems.
}%
The notable feature of the probe vehicles is that they collect spatially continuous data from wide-ranging area.
Furthermore, the data can be collected during probe vehicles' daily travel, making them a cost-effective traffic data collection tool.
These are remarkable advantages compared with conventional detectors, which can collect data at limited discrete locations \citep{Herrera2010traffic, Jenelius2015probe, Zheng2017intersection}.
If FDs can be estimated by probe vehicle data by leveraging their advantages, it will enable us to estimate FDs in virtually everywhere.
It will further enable us to implement various traffic science and engineering works to large-scale road networks.
However, there is a paucity of research on FD estimation by using probe vehicle data.
Thus, systematic and computational approaches for FD estimation are desirable given the high availability of probe vehicle data.

The aim of this study is to establish methodology to estimate FDs by using probe vehicle data.
In order to enable the estimation, we allow the methodology to rely on minimum exogenous assumptions such as FD's functional form and value of a parameter of the FD (i.e., the jam density that is inferred from external knowledge).
Moreover, this study presents a computable algorithm for FD estimation based on data collected by multiple probe vehicles that are driving potentially noisy real-world traffic.
The algorithm was applied to real-world connected vehicle data to investigate the empirical validity of the proposed methodology.

The rest of the paper is organized as follows.
In Section \ref{sec_rev}, the literature on FD estimation is reviewed and originality of this study is highlighted.
In Section \ref{sec_theory}, the basic idea of FD estimation using probe vehicle data is discussed, and the identifiability of a triangular FD under idealized conditions is theoretically clarified.
Then, in Section \ref{sec_method}, an computational algorithm for estimating a triangular FD by using actual, noisy traffic data is developed based on the basic idea.
The method statistically estimates the most probable FD using maximum likelihood estimation based on data collected by multiple probe vehicles.
In Section \ref{sec_validation}, the empirical performance of the proposed algorithm is verified by applying it to real-world probe vehicle data.
Section \ref{sec_conclusion} concludes this paper.
The frequently used abbreviations and notations are listed in \ref{sec_notation}.

\section{Literature review on FD estimation}\label{sec_rev}

FD estimation by using detectors is extensively studied since 1935.
Many approaches use traffic data aggregated for a short time period (e.g., traffic count for 30~sec--5~min periods) as inputs of estimation and extract statistical relation from them \citep[see][and references therein]{Treiber2013traffic, Qu2017fd, Knoop2017fd}.
Other approaches use disaggregated data that distinguish each individual vehicle in measurement area \citep[e.g.,][]{Duret2008fd, Chiabaut2009fd, Coifman2014fd, Coifman2015fd, Jin2015fd}.
The latter approaches enable the precise identification of steady traffic states.
These methodologies rely on the fact that flow and density or trajectories of all the vehicles are easily obtained by detectors.
However, this is not the case for typical probe vehicles: only trajectories of randomly sampled vehicles are collected by probe vehicles.
Thus, these methodologies are not applicable for FD estimation by using probe vehicle data.

There is a paucity of research on FD estimation by using probe vehicle data.
A few probe vehicle-based traffic state estimation methods simultaneously estimate FDs and traffic state \citep[e.g.,][]{Tamiya2002probe, Seo2015ieee, sun2017tse}.
However, they rely on additional data sources, such as detector measurement or vehicle spacing in addition to probe vehicle's positioning.
Appendix in \citet{Herrera2010probe} mentioned an idea for manual inference of an FD by using probe vehicle data under special conditions. 
However, it was not formulated and not computable.
Their idea involves manually detecting a shockwave between a saturated free-flowing traffic and congested traffic, deriving the backward wave speed of a triangular FD by manually measuring the speed of the shockwave, and subsequently calculating the other FD parameters by using a given jam density.
However, detection of a shockwave, detection of saturated traffic, and identification of the speed of shockwave are not trivial tasks in actual traffic.

\section{Identification of FD by using probe vehicles under idealized conditions}\label{sec_theory}

In this section, we mathematically investigate identifiability of FDs by using probe vehicle data under idealized conditions, such as homogeneous traffic described by the LWR model and accurate measurement.

\subsection{Preliminaries}\label{sec_prel}

\subsubsection{Assumptions}\label{sec_assumption}

We assume that traffic dynamics are described by the LWR model \citep{Lighthill1955LWR, Richards1956LWR} with a homogeneous FD and first-in first-out (FIFO) condition without a source or sink.
It is expressed as
\begin{align}
	&\frac{\del k(t,x)}{\del t} + \frac{\del q(t,x)}{\del x} = 0,	\\
	&q(t,x) = Q(k(t,x); \tthh),	\label{fd}
\end{align}
where $q(t,x)$ and $k(t,x)$ denote flow and density, respectively, at time--space point $(t,x)$,
$Q$ denotes a flow--density FD, and
$\tthh$ denotes a vector representing the FD parameters.

The functional form of an FD $Q(k; \tthh)$ is assumed as to be {\it triangular} \citep{Newell1993kinematic1}, which is expressed as
\begin{subequations}
\begin{numcases}{Q(k;u,w,\kj)=}
	uk	&	if $k \leq \frac{w\kj}{u+w}$	\label{freeflow}\\
	w(\kj-k) & otherwise,		\label{congested}
\end{numcases}
\end{subequations}
where $u$ denotes the free-flow speed, $w$ denotes the backward wave speed, and $\kappa$ denotes the jam density.
Thus, $\tthh$ consists of $u$, $w$, and $\kappa$.
Equation~\eqref{freeflow} represents a {\it free-flowing regime}, whereas Eq.~\eqref{congested} represents a {\it congested regime}.
The triangular FD is employed owing to its simplicity, good agreement with actual phenomena, and popularity in various applications \citep[e.g.,][]{daganzo2006kinematic}.
The triangular FD is employed owing to its simplicity, good physical interpretation of parameters, and popularity in various applications \citep[e.g.,][]{daganzo2006kinematic}.
Note that it is easy to generalize the proposed methodology to incorporate single-valued, continuous, piecewise/fully differential FDs; see \ref{apx_general}.

A probe vehicle dataset is assumed to contain continuous trajectories of randomly sampled vehicles without any errors.
It is represented as $\{X(t,n)~|~\forall n \in \PP\}$, where $X(t,n)$ denotes the position that vehicle $n$ exists at time $t$, and $\PP$ denotes a set of all the probe vehicles.
The penetration rate of probe vehicles is unknown.

\subsubsection{Definitions}\label{sec_definition}

Let {\it traffic state} in a time--space region be defined by Edie's generalized definition \citep{Edie1963generalized}, namely,
\begin{subequations}
\begin{align}
	q(\AA) &= \frac{\sum_{n \in \NN(\AA)} d_n(\AA)}{|\AA|},	\\
	k(\AA) &= \frac{\sum_{n \in \NN(\AA)} t_n(\AA)}{|\AA|},	\\
	v(\AA) &= \frac{\sum_{n \in \NN(\AA)} d_n(\AA)}{\sum_{n \in \NN(\AA)} t_n(\AA)},
\end{align}%
\end{subequations}%
where $\AA$ denotes a time--space region, and
$q(\AA)$, $k(\AA)$, and $v(\AA)$ represent flow, density, and speed, respectively, in the time--space region $\AA$,
$\NN(\AA)$ denotes a set of all the vehicle in $\AA$,
$d_n(\AA)$ and $t_n(\AA)$ denote distance traveled and time spent, respectively, by vehicle $n$ in $\AA$, and 
$|\AA|$ denotes the area of $\AA$.
If traffic in $\AA$ is steady,
\begin{align}
	q(\AA) = Q(k(\AA); \tthh)	\label{fda}
\end{align}
is satisfied by the definition of the LWR model \eqref{fd}.

Let us consider a {\it probe pair}, that is, a pair of arbitrary two probe vehicles.
A pair is denoted by $m$, and the preceding probe vehicle in the pair $m$ is denoted by $\ml$ whereas the following probe vehicle is denoted by $\mf$.
Note that arbitrary and unknown number of vehicles may exist between a probe pair.

Let $(q_m(\AA),k_m(\AA))$ be {\it probe traffic state (PTS)} of probe pair $m$ defined as
\begin{subequations}
\begin{align}
	q_m(\AA) &= \frac{d_\mf(\AA)}{|\AA|},	\\
	k_m(\AA) &= \frac{t_\mf(\AA)}{|\AA|},	
\end{align}	\label{pts_def}%
\end{subequations}%
following Edie's definition.
Similarly, probe-speed is also defined as
\begin{align}
	v_m(\AA) &= \frac{d_\mf(\AA)}{t_\mf(\AA)}.
\end{align}
The PTS denotes flow and density of probe vehicle $\mf$ only in time--space region $\AA$.
If (i) traffic in $\AA$ is steady, (ii) distance traveled by each vehicle in $\AA$ is equal to the others, and (iii) probe vehicle $\mf$ exists in $\AA$, then the following relations are satisfied:
\begin{subequations}
\begin{align}
	q_m(\AA) &= \frac{q(\AA)}{|\NN(\AA)|},	\label{pq}\\
	k_m(\AA) &= \frac{k(\AA)}{|\NN(\AA)|},	\label{pk}\\
	v_m(\AA) &= v(\AA),
\end{align}%
\end{subequations}%
where $|\NN(\AA)|$ denotes the size of $\NN(\AA)$ (i.e., number of the vehicles in $\AA$).

Let $c_m-1$ denote the number of vehicles between probe pair $m$, and $\aa_m$ denote a time--space region whose boundary is on trajectories of $\mf$ and $\ml$ (the region includes $\mf$ but not $\ml$).
Since source/sink and FIFO-violations are absent, $c_m$ is constant and 
\begin{align}
	c_m = |\NN(\aa_m)|
\end{align}
always holds irrespective of shape, time, and location of $\aa_m$.
If (i) traffic in $\aa_m$ is steady and (ii) distance traveled by each vehicle in $\aa_m$ is equal to the others, then
\begin{align}
	q_m(\aa_m) = \frac{Q(c_m k_m(\aa_m); \tthh)}{c_m}	\label{fd-pfd}
\end{align}
is satisfied from Eqs.~\eqref{fda}, \eqref{pq}, and \eqref{pk}.
Hereafter, a PTS of pair $m$ under the aforementioned conditions (i) and (ii) is simply referred to as {\it steady PTS}.
Equation~\eqref{fd-pfd} means that steady PTSs follow a relation that is similar (in the geometric sense) to the FD with a scale of $1/c_m$ on a flow--density plane.
Hereafter, this relation is referred to as {\it probe fundamental diagram (PFD)} of $m$.
Note that the PFD of probe pair $m$ is also expressed
\begin{align}
	q_m(\aa_m) = Q(k_m(\aa_m); \tthh_m).		\label{pfd}
\end{align}
where $\tthh_m$ denotes a pair-specific parameter vector, because the FD and PFD are similar.
Because of the similarity, $\tthh_m$ consists of free-flow speed of the PFD, $u$, backward wave speed of the PFD, $w$, and jam density of the PFD, $\kappa/c_m$.

In order to ensure that $\aa_m$ easily satisfies the aforementioned conditions (i) and (ii), we specify $\aa_m$ as follows:
\begin{align}
	\aa_{mi} = \{(t,x)~|~	&X(t,\mf) \leq x < X(t,\ml) \textrm{ and }\notag\\ 
							&-\phi(t-\tau_i)+X(\tau_i, m_+) \leq x \leq -\phi(t-\tau_i-\Delta t)+X(\tau_i+\Delta t, m_+) \},	\label{adef}
\end{align}
where $i$ denotes an index number, $\tau_i$ denotes a reference time for $\aa_{mi}$, and $\phi$ and $\Delta t$ are the given parameters.
An arbitrary value is acceptable for $\tau_i$ and $\phi$, and an arbitrary non-zero value is acceptable for $\Delta t$ if the corresponding trajectory $X(t,n)$ exists.
The shape of $\aa_m$ is illustrated in \figref{shape_a}.
The specification \eqref{adef} guarantees that if traffic in $\aa_m$ is steady, then distance traveled by each vehicle in $\aa_m$ is equal to the others, which is $d_{m_+}(\aa_{mi})$.
This feature ensures that the following discussion is concise without sacrificing significant generality.

\begin{figure}[!htb]
	\centering
	\includegraphics[clip, width=0.6\hsize]{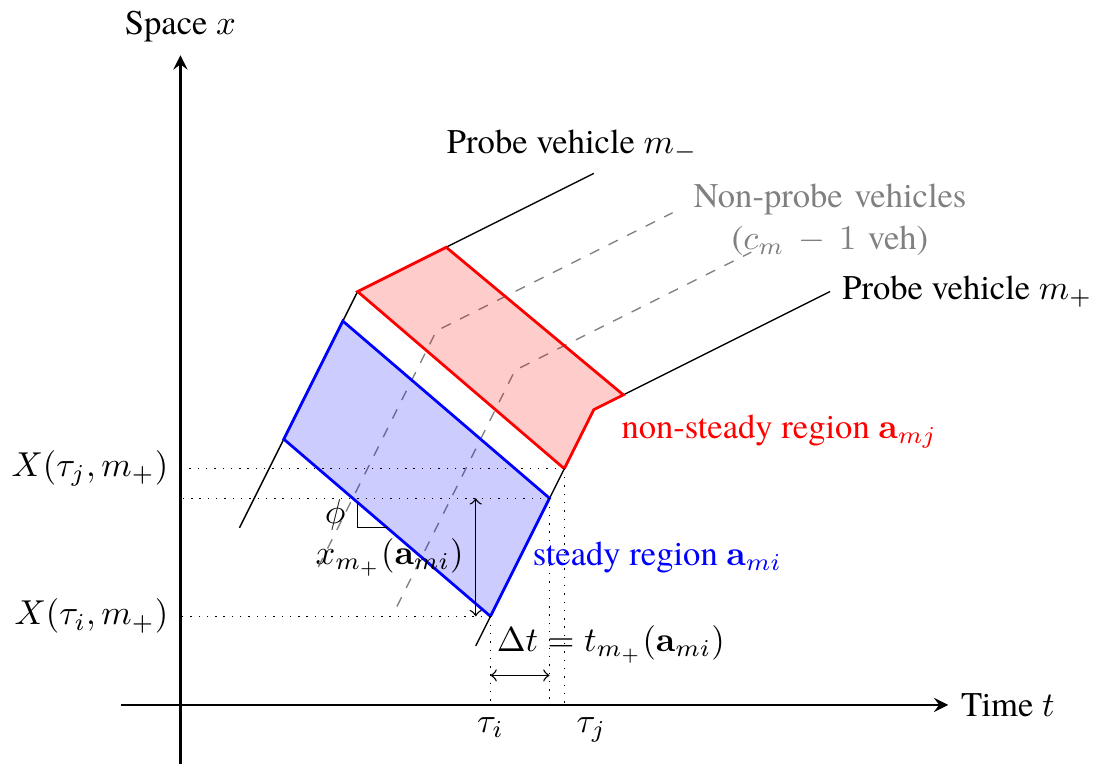}
	\caption{Shape of the time--space region $\aa_{m}$.}
	\label{shape_a}
\end{figure}

An FD is said to be {\it identifiable} by using a given dataset (a set of traffic states or a set of PTSs) if the value of its parameter vector $\tthh$ is determined based on the dataset.
Similarly, a PFD of certain probe pair $m$ is said to be identifiable by using the given dataset (a set of PTSs of $m$) if the value of its parameter vector $\tthh_m$ is determined based on the dataset.

For the readers' convenience, important abbreviations and notations used throughout this paper are summarized in \ref{sec_notation}.

\subsection{Identifiability of FD based on probe vehicle data}

It is mathematically obvious that following proposition holds true:
\begin{proposition}\label{pro_fd_identify_triangle}
	A triangular FD is identifiable if (i) one or more steady traffic state data from the free-flowing regime are available and (ii) two or more different steady traffic state data from the congested regime are available.
\end{proposition}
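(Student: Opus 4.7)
The plan is essentially to observe that a triangular FD is parameterized by three scalars $(u, w, \kappa)$, and that the two regimes \eqref{freeflow} and \eqref{congested} are affine functions of density, so identifiability reduces to elementary linear algebra on the available data points.

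First I would treat the free-flowing regime. By assumption (i), there exists at least one steady state $(k_1, q_1)$ with $q_1 = Q(k_1;u,w,\kappa) = u k_1$ and $k_1 > 0$ (since $k_1 = 0$ would be trivially consistent with every $u$, I would rule this degeneracy out by noting that a genuine steady state in the free-flow branch must have positive density; alternatively I can simply remark that $k_1 = 0$ carries no information, in which case a nontrivial free-flow point is needed). Then $u$ is determined uniquely as $u = q_1 / k_1$.

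Next I would handle the congested regime. By assumption (ii), there exist two distinct steady states $(k_2, q_2)$ and $(k_3, q_3)$ in that regime, with $k_2 \neq k_3$ (this is what I interpret ``different'' to mean; if ``different'' only meant distinct in the flow coordinate, the same conclusion would follow because the congested branch is strictly monotone). From \eqref{congested} I get the linear system
\begin{align}
q_2 &= w(\kappa - k_2), \\
q_3 &= w(\kappa - k_3).
\end{align}
Subtracting gives $w = (q_3 - q_2)/(k_2 - k_3)$, which is well defined because $k_2 \neq k_3$, and then $\kappa = k_2 + q_2/w$ recovers the jam density (provided $w \neq 0$, which holds for any nontrivial congested data since otherwise both $q_2$ and $q_3$ would vanish while $k_2 \neq k_3$, contradicting steadiness on the congested branch \eqref{congested} with $\kappa < \infty$).

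Since $u$, $w$, and $\kappa$ together constitute $\tthh$, the parameter vector is fully determined by the given data, so the FD is identifiable in the sense of Section~\ref{sec_definition}. There is no real obstacle here beyond carefully ruling out the degenerate boundary cases ($k_1 = 0$, $k_2 = k_3$, $w = 0$); all three are excluded by reading ``steady traffic state data'' as giving genuine interior points of the respective regimes and by interpreting ``different'' as distinct densities. The argument for an arbitrary piecewise-linear or piecewise-differentiable FD would be analogous but would require enough points per piece to pin down each linear (or nonlinear) segment.
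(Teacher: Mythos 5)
Your proof is correct and coincides with the paper's intended argument: the paper states this proposition without any explicit proof, declaring it ``mathematically obvious,'' and the elementary reasoning it has in mind is exactly yours---one free-flow datum fixes the slope $u$ of the line through the origin, and two congested data with distinct densities fix the line $q = w(\kappa - k)$, hence $w$ and $\kappa$, so all of $\tthh$ is determined. Your explicit handling of the degenerate cases ($k_1 = 0$, $k_2 = k_3$, $w = 0$) and of the meaning of ``different'' is more careful than anything the paper writes down, and is a welcome addition rather than a deviation.
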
\noindent
Given the relation between a FD and a corresponding PFD of a probe pair shown in Eq.~\eqref{fd-pfd}, following corollary can be obtained from Proposition \ref{pro_fd_identify_triangle}:
\begin{corollary}\label{coro_pfd_identify_triangle}
	A triangular PFD is identifiable if (i) one or more steady PTS data from the free-flowing regime are available for a particular probe pair and (ii) two or more different steady traffic state data from the congested regime are available for the same probe pair.
\end{corollary}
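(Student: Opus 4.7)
The plan is to reduce Corollary~\ref{coro_pfd_identify_triangle} to Proposition~\ref{pro_fd_identify_triangle} applied to the PFD rather than to the FD. The key observation is that Equation~\eqref{pfd} already exhibits the PFD as a triangular flow--density relation of exactly the same functional form as the FD, with parameter vector $\tthh_m = (u, w, \kappa/c_m)$. Consequently, identifiability of a triangular FD from steady traffic states should translate immediately into identifiability of a triangular PFD from steady PTSs, with PTSs playing the role of traffic states in the $(k_m, q_m)$ plane.

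First I would note that, by Equation~\eqref{fd-pfd}, every steady PTS of pair $m$ lies on its PFD. Then I would invoke the same geometric argument as in Proposition~\ref{pro_fd_identify_triangle}: a single steady PTS in the free-flowing regime fixes the slope of the free-flowing line through the origin, which is $u$; two distinct steady PTSs in the congested regime fix the linear congested branch $q_m = w(\kappa/c_m - k_m)$, whose slope and $k_m$-intercept recover $w$ and $\kappa/c_m$. All three components of $\tthh_m$ are thus determined by the data, which is exactly what identifiability of the PFD requires.

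The main (and essentially only) obstacle is rhetorical clarity: one must make explicit that what is identified for the PFD is the triple $(u, w, \kappa/c_m)$ and not $(u, w, \kappa)$, so that the unknown headway count $c_m$ need not be known in order to identify the PFD itself. Since by the definition at the end of Section~\ref{sec_definition} identifiability of a PFD means recovery of $\tthh_m$ rather than $\tthh$, this is consistent with the corollary's statement, and no further work beyond invoking the similarity in Equation~\eqref{fd-pfd} is required.
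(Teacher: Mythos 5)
Your proposal is correct and follows essentially the same route as the paper, which states the corollary as an immediate consequence of Proposition~\ref{pro_fd_identify_triangle} via the similarity relation in Eq.~\eqref{fd-pfd}, with steady PTSs playing the role of steady traffic states and the identified parameter vector being $\tthh_m = (u, w, \kappa/c_m)$. Your explicit remark that identifiability of the PFD means recovering $\tthh_m$ rather than $\tthh$ (so $c_m$ need not be known) is exactly the point the paper relies on, and you correctly read the corollary's ``steady traffic state data'' in condition~(ii) as steady PTS data, consistent with the paper's intent.
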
\noindent
This means that the value of the free-flow speed and the backward wave speed can be determined by appropriate probe vehicle data.

According to Eq.~\eqref{fd-pfd} again, an PFD is similar (in the geometric sense) to a corresponding FD with a scale of $1/c_m$ on a flow--density plane.
Therefore, once an PFD is identified and the value of the jam density is given, the corresponding FD can be identified.
By combining this fact and Corollary \ref{pro_fd_identify_triangle}, following corollary can be obtained:
\begin{corollary}\label{coro_fd_identify_triangle}
	A triangular FD is identifiable if (i) one or more steady PTS data from the free-flowing regime are available for a probe pair, (ii) two or more different steady PTS data from the congested regime are available for the probe pair, and (iii) the value of the jam density is known.
\end{corollary}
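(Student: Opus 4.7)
The plan is to chain Corollary~\ref{coro_pfd_identify_triangle} with the geometric similarity between a PFD and its underlying FD recorded in Eq.~\eqref{fd-pfd}. Under hypotheses (i) and (ii), Corollary~\ref{coro_pfd_identify_triangle} applies to the given probe pair, so its triangular PFD is identifiable and the pair-specific parameter vector $\tthh_m$ is fully determined by the PTS data.

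Next I would invoke the explicit form of $\tthh_m$ stated in the paragraph following Eq.~\eqref{pfd}: because the PFD and FD are similar with scale $1/c_m$ on the flow--density plane, $\tthh_m$ consists of the FD's free-flow speed $u$, the FD's backward wave speed $w$, and the scaled jam density $\kappa/c_m$. Identification of $\tthh_m$ therefore delivers $u$ and $w$ of the FD outright. Hypothesis (iii) then supplies the remaining parameter $\kappa$, so the full triple $(u, w, \kappa) = \tthh$ is determined, and by Eqs.~\eqref{freeflow}--\eqref{congested} the triangular FD $Q(\cdot;\tthh)$ is completely specified. Notably the unknown intermediate vehicle count $c_m$ need never be computed, because the scaled jam density $\kappa/c_m$ carried by $\tthh_m$ is discarded in favor of the given $\kappa$.

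The main subtle point, rather than a serious obstacle, is to justify that the free-flow speed and backward wave speed of the PFD truly coincide with those of the FD itself. This is exactly where the similarity matters: Eq.~\eqref{fd-pfd} reads $q_m(\aa_m) = Q(c_m k_m(\aa_m); \tthh)/c_m$, so the map $(k,q) \mapsto (k/c_m, q/c_m)$ rescales both axes by the same factor and therefore leaves the slopes of the two linear branches of the triangular FD invariant, while only the density-axis intercept is scaled by $1/c_m$. Once this slope-invariance is observed, the argument reduces to bookkeeping on the three scalar components of $\tthh_m$ and $\tthh$, and the corollary follows directly from Corollary~\ref{coro_pfd_identify_triangle} combined with hypothesis (iii).
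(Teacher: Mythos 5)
Your proposal is correct and follows essentially the same route as the paper, which likewise obtains the corollary by combining Corollary~\ref{coro_pfd_identify_triangle} with the similarity relation of Eq.~\eqref{fd-pfd} and the given jam density. Your explicit verification that the rescaling $(k,q) \mapsto (k/c_m, q/c_m)$ preserves the slopes $u$ and $-w$ while scaling only the jam density merely spells out what the paper asserts without proof in the paragraph following Eq.~\eqref{pfd}, so it is a welcome elaboration rather than a departure.
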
\noindent
Notice that it is not necessary for an analyst to know whether a specific PTS belongs to free-flowing or congested regime beforehand.
The belongingness is determined as a result of the identification of the FD.

\subsection{Geometric interpretation of FD identification based on probe vehicle data}

Geometric interpretation of the triangular FD identification based on Corollary \ref{coro_fd_identify_triangle} is explained using \figref{concept}.
\figref{concept} shows traffic flow consisting of free-flow traffic and congested traffic as a time--space diagram (top) and a flow--density plane (bottom).
In the time--space diagram, probe vehicles consisting of pair $m$ are denoted by solid lines, non-probe vehicles are denoted by dashed gray lines, and a shockwave is denoted by an arrow.
In the flow--density plane, dots indicate traffic states or PTSs, solid lines indicate transition of traffic states; the dashed line denotes the FD, and the thick dashed line denotes the PFD.
Notice that the FD and the PFD are similar (in the geometric sense) with scale of $1/c_m$ as in \equref{fd-pfd}.
The blue area (top) and dot (bottom) indicate a steady region with free-flow traffic, the green ones indicate non-steady regions, and the red ones indicate steady region with congested traffic.

\begin{figure}[!htb]
	\centering
	\includegraphics[clip, width=0.86\hsize]{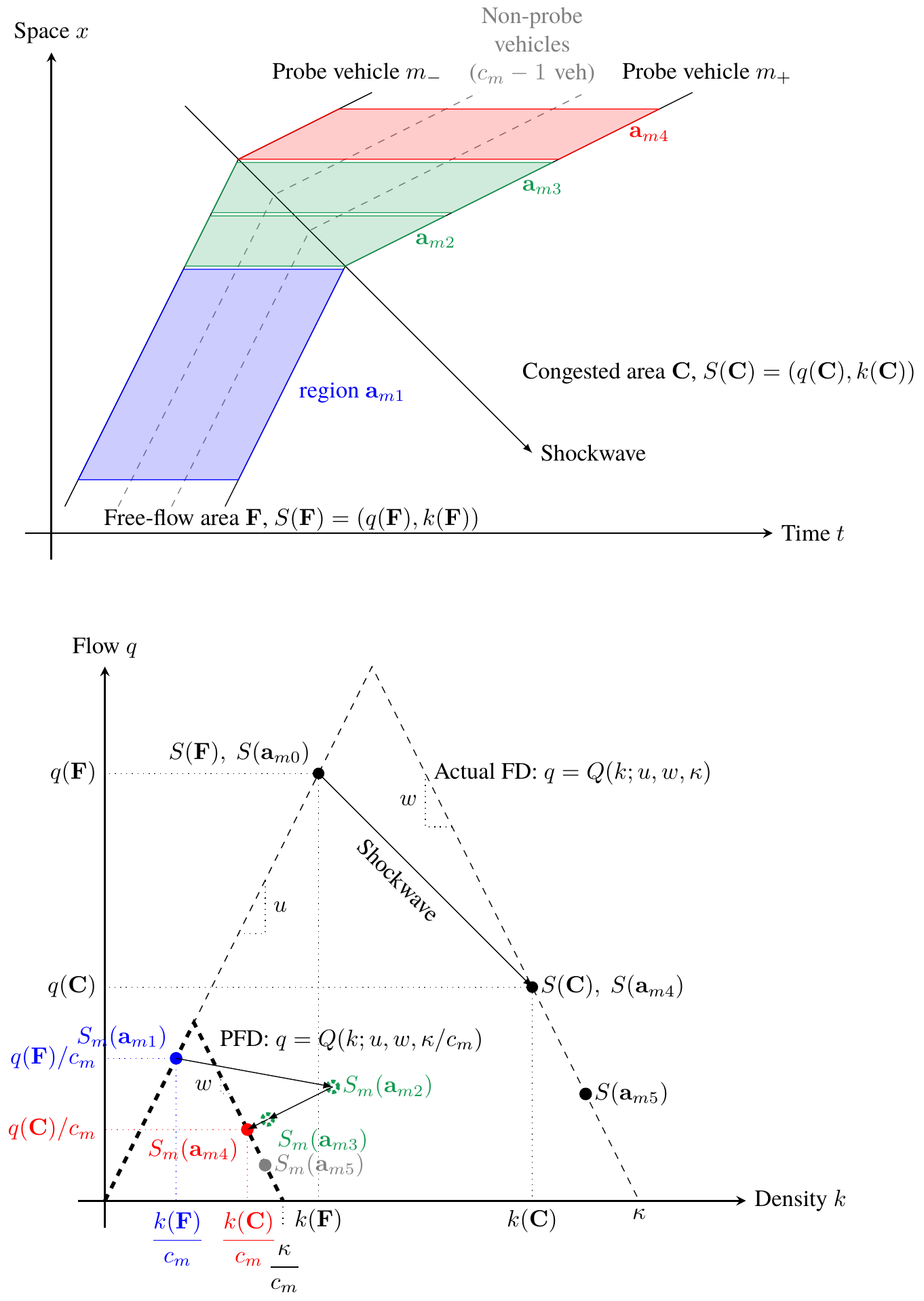}
	\caption{Illustration of triangular FD identification. Top: Probe pair and PTS in time--space diagram. Bottom: FD, PFD, and PTS on flow--density plane.}
	\label{concept}
\end{figure}

As explained, traffic state $S_m(\aa_{mi})$ follows the FD if region $\aa_{mi}$ is steady.
Similarly, PTS $S_m(\aa_{mi})$ is equal to $S_m(\aa_{mi})/c_m$ and follows the PFD if region $\aa_{mi}$ is steady.
In \figref{concept}, regions $\aa_{m1}$ and $\aa_{m4}$ are steady; and thus their PTS are on the PFD.
Conversely, regions $\aa_{m2}$ and $\aa_{m3}$ are non-steady; and thus their PTS are not on the PFD.
Steadiness of region $\aa_{mi}$ can be inferred by using trajectories of probe vehicles.
The necessary conditions for the steadiness are that the two probe vehicles' speeds are time-invariant, and they are equal to each other.

Suppose that the probe vehicles $m_-$ and $m_+$ traveled to another congested region, say $\aa_{m5}$ (the gray dot in \figref{concept}).
In this case, the PFD is uniquely determined by selecting values of $u$, $w$, and $\kappa/c_m$ such that all the points $S_m(\aa_{m1})$, $S(\aa_{m4})$, and $S(\aa_{m5})$ are on the PFD $q=Q(k;u,w,\kj/c_m)$.
In other words, a triangle that satisfy following conditions can be determined uniquely: its vertex is at point $(0,0)$, one of its edges is on line $q=0$, and its another two edges pass the blue point, red point, and gray point.
Thus, the FD is identified from the probe vehicle data and the jam density.

\subsection{Discussion}\label{sec_discussion}

The proposed method identifies all the parameters of an FD with the exception of its jam density by using probe vehicle trajectory data.
{The value of the jam density is required to be known prior to the FD identification, as it relates an FD and a PFD.}
There are several methods to obtain the value: common knowledge and remote sensing would be especially useful at this moment.

The jam density is considerably time- and space-independent variable compared to the other FD parameters such as capacity.
Thus, it is reasonable to set the jam density based on a value that is observed at nearby locations.
Alternatively, remote sensing (e.g., image recognition) from satellites is used to measure the jam density directly.
Although remote sensing does not provide us time-continuous traffic state variables, such as flow and speed (because the time interval between two measurements is extremely long, few hours to few days at least), it measures space-continuous density accurately \citep{mccord2003satellite}.
Thus, the jam density could be inferred as an upper limit of such measured density.

Besides, the proposed method identifies the values of the free-flow speed and the backward wave speed even if the value of the jam density is unknown as explained by Corollary \ref{coro_pfd_identify_triangle}.
Since these two variables are important to characterize traffic flow, this property would be useful for various purposes such as traffic speed reconstruction and travel time estimation \citep{treiber2011tse}.

A limitation of the proposed method is that a homogeneous FD is assumed.
It means that the method in itself is unable to consider a bottleneck.
This issue can be resolved by extending the method: for example, it is possible to estimate section-dependent FDs by applying the proposed method to each section iteratively.
Another limitation is that the proposed method does not consider phenomena related with multi-valued and/or non-continuous FDs such as a capacity drop.

\section{Estimation algorithm for actual traffic}\label{sec_method}

Actual traffic data are not exactly consistent with the theoretical LWR model, and thus it is not possible to directly apply the methodology described in Section \ref{sec_theory} to actual data---all the data will be considered as non-steady, and {PTSs reported by different probe pairs may not derive the same PFD and FD.}
The inconsistency is mainly due to two sources.
First, various traffic phenomena, such as the existence of non-equilibrium flow, bounded acceleration, heterogeneity among vehicles, FIFO violation in multi-lane sections, and the resulting stochasticity of macroscopic features, are not considered by the LWR model \citep{Jabari2014fd, Jin2015fd, Coifman2015fd, Qu2017fd, Jin2018bounded}.
Second, data always include measurement noise.

{In order to account for this challenge, this section presents a statistical estimation algorithm for a triangular FD from actual traffic data collected from multiple probe pairs.
Specifically, a method of extracting near-steady traffic state data from probe vehicle data is presented in Section \ref{sec_stat}.
Then, an estimation method for free-flow speed and backward wave speed based on near-steady traffic state data collected by multiple probe pairs, which may exhibit noisy and heterogeneous behavior, is presented in Sections \ref{sec_back} and \ref{sec_em}.}

In order to simplify the mathematical representation, the triangular FD is characterized by free-flow speed, $u$, backward wave speed, $w$, and, $y$-intercept of the congestion part of FD, $\alpha$ as illustrated in \figref{triangularfd}.
Note that $\alpha \equiv \kappa w$ holds, and thus, the characterization does not contradict the assumptions in Section \ref{sec_prel}.

\begin{figure}[htb]
	\centering
	\includegraphics[clip, width=0.5\hsize]{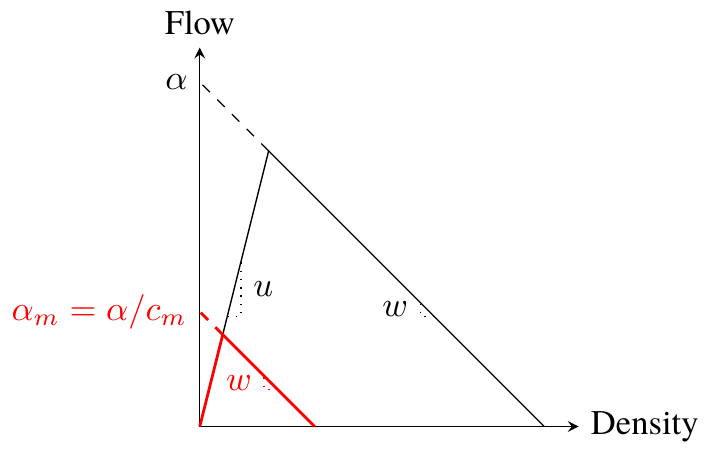}
	\caption{Triangular FD with its three parameters $u$, $w$, and $\alpha$. Black curve denotes FD, and red curve denotes the corresponding PFD of pair $m$.}
	\label{triangularfd}
\end{figure}

\subsection{Extraction of near-steady PTSs}\label{sec_stat}

Suppose that multiple PTSs $S_m(\aa_{mi})$ for multiple $m$ and $i$ were calculated from probe vehicle data based on the definition \eqref{pts_def} and \eqref{adef}.
This set of PTSs is referred to as {\it raw PTS set}.

It is necessary to extract PTSs that are near-steady, meaning that they nearly satisfy the conditions in Corollary \ref{coro_fd_identify_triangle} from the raw PTS set.
To do this, two filtering methods are employed.
The first method is based on the linearity of the trajectories of two probe vehicles in pair $m$.
Specifically, $S_m(\aa_{mi})$ is considered as near-steady if trajectories of pair $m$ in region $\aa_{mi}$ satisfy a condition based on the coefficient of variation:
\begin{align}
	\frac{\sigma_m(\aa_{mi})}{\bar{v}_m(\aa_{mi})} \leq \theta_\text{steady}		\label{statidentify}
\end{align}
where $\sigma_m(\aa_{mi})$ and $\bar{v}_m(\aa_{mi})$ denote the standard deviation and mean, respectively, of the instantaneous speed of probe vehicles in pair $m$ in region $\aa_{mi}$, and
$\theta_\text{steady}$ denotes a given threshold.

In the second method, $\{S_m(\aa_{mi})~|~\forall i\}$ of each $m$ is verified as to whether it is likely to satisfy the conditions in Corollary \ref{coro_fd_identify_triangle}.
Data from the probe pair $m$ are discarded if $\{S_m(\aa_{mi})~|~\forall i\}$ included no $S_m(\aa_{mi})$ with speed lower than $u_{\min}$ or no $S_m(\aa_{mi})$ with speed faster than $u_{\min}$, where $u_{\min}$ denotes a given lower bound for free-flow speed.\footnotemark{}
\footnotetext{%{}
	Note that although a lower bound for free-flow speed $u_{\min}$ is used in this procedure, it is not necessary to know the precise value of free-flow speed before estimation. 
	This point is discussed later.
}
This corresponds to conditions (i) and (ii) in Corollary \ref{coro_fd_identify_triangle}. 
Additionally, data from probe pair $m$ are discarded if correlation between probe-flow and probe-density in $\{S_m(\aa_{mi})~|~\forall i, v(\aa_{mi}) \leq u_{\min}\}$ exceeds $\theta_{\text{corr}}$, or standard deviation in $\{\bar{v}_m(\aa_{mi})~|~\forall i, v(\aa_{mi}) \leq u_{\min}\}$ is smaller than $\theta_\text{std}$, where $\theta_{\text{corr}}$ and $\theta_\text{std}$ denotes given thresholds.
This corresponds to condition (ii) in Corollary \ref{coro_fd_identify_triangle}.

Hereafter, a set of PTS data obtained by the above procedure is referred to as {\it near-steady PTS set}.
A set of probe pairs that collected elements in the near-steady PTS set is denoted as $M$, and a near-steady PTS set obtained by pair $m$ is denoted as $I_m$.

\subsection{Backward wave speed estimation}\label{sec_back}

Backward wave speed is first estimated separately from the free-flow speed.
This is because it is easy to determine whether {\it a PTS datum definitely belongs to congested regimes} based on the lower bound of the free-flow speed $u_{\min}$.

Specifically, the backward wave speed $w$ is estimated by performing a linear regression on $(k(\aa_{mi}), q(\aa_{mi}))$ for each $i \in I_m$, $m \in M$, and $v_m(\aa_{mi}) \leq u_{\min}$; and the final estimate is given as the mean of all the regression results.
Condition (ii) in Corollary \ref{coro_fd_identify_triangle} is essential to estimate $w$ appropriately by the method.

\subsection{Free-flow speed estimation by using EM algorithm}\label{sec_em}

The free-flow speed $u$ is subsequently estimated.
To do this, an iterative algorithm categorized as expectation--maximization (EM) algorithm \citep{dempster1977em} is constructed.
In this algorithm, free-flow $u$ is estimated in conjunction with the $y$-intercept value $\alpha_m$ and other variables, such as standard deviation of free-flowing and congested part of an FD, based on the near-steady PTS set.
	
\subsubsection{Likelihood function}

Assume that a probe-flow of a steady PTS belonging to regime $s$ (free-flowing, $s=\F$, or congested, $s=\C$) follows a normal distribution in which the mean is the true PFD and standard deviation is $r_m\sigma_{s}$, where $\sigma_{s}$ denotes a standard deviation of traffic state from an FD, $r_m$ denotes a pair-specific normalizing parameter defined as $\bar{q}_m/\bar{q}$, $\bar{q}_m$ denotes the mean of probe-flow of pair $m$, and $\bar{q}$ denotes the mean of probe-flow of all of the probe pairs.
Thus, likelihood of regime $s$ of PTS datum $i$ of pair $m$ is expressed as
\begin{align}
	l(m,i,s) = \left\{\begin{array}{ll}
		f\(q_{mi} - u k_{mi}, r_m\sigma_s\), & \text{if } s = \F,		\\
		f\(q_{mi} - (\alpha_m -w k_{mi}), r_m\sigma_s\), & \text{if } s = \C,		\\
	\end{array}\right.
\end{align}
where $f(x,\sigma)$ denotes a value of the probability distribution function of a normal distribution with mean $0$ and standard deviation $\sigma$ at $x$.
Note that variables $u$, $\alpha_m$, $s$, and $\sigma_s$ are unknown, while the rest are known.

The likelihood of the mixture distribution for all the near-steady PTS set is given by
\begin{align}
	L\(q, k | \pi, u, \{\alpha_m~|~\forall m\}, \{\sigma_s~|~\forall s\}\) = \prod_{m \in M} \prod_{i \in I_m} \sum_{s \in \{\F, \C\}} \pi_s l(m,i,s),		\label{trueL}
\end{align}
where $\pi_s$ denotes the contribution of state $s$ in the mixed distribution \citep[c.f.,][]{Bishop2006en}.
The intuitive meaning of $L$ is a likelihood of observing PTS $(q_{mi}, k_{mi})$ set when parameters\footnotemark{} $\pi, u, w, \{\alpha_m\}, \{\sigma_s\}$ are given.
\footnotetext{%{}
	Hereafter, $\{\alpha_m~|~\forall m\}$ and $\{\sigma_s~|~\forall s\}$ are abbreviated as $\{\alpha_m\}$ and $\{\sigma_s\}$, respectively, for the purposes of simplicity.
}%

Our problem is to estimate the latent parameters $\pi, u, \{\alpha_m\}, \{\sigma_s\}$ from the observed near-steady PTS $(q_{mi}, k_{mi})$ set.
This can be solved by applying an EM algorithm.
If the complete dataset \citep[c.f.,][]{Bishop2006en} is available (i.e., latent variable $z_{mis}$ that is 1 if $S_{mi}$ belongs to regime $s$, 0 otherwise is given), then Eq.~\eqref{trueL} is transformed
\begin{align}
	L(q, k | \pi, u, \{\alpha_m\}, \{\sigma_s\}, z) = \prod_{m \in M} \prod_{i \in I_m} \prod_{s \in \{\F, \C\}} \pi_s^{z_{mis}} l(m,i,s)^{z_{mis}}.		\label{quasiL}
\end{align}
By taking its expectation of logarithm, one obtains\footnotemark{}
\footnotetext{%{}
	Hereafter, $\sum_{m \in M}, \sum_{i \in I_m}$, and $\sum_{s \in \{\F, \C\}}$ are abbreviated as $\sum_m, \sum_i$, and $\sum_s$, respectively, for the purposes of simplicity.
}%
\begin{align}
	E[LL(q, k | \pi, u, \{\alpha_m\}, \{\sigma_s\}, \{p_{mi}(s)\})] = \sum_m \sum_i \sum_s p_{mi}(s) \(\ln \pi_s + \ln l(m,i,s)\),	 	\label{ELL}
\end{align}
where $p_{mi}(s)$ denotes an expectation of $z_{mis}$.
The EM algorithm determines the value of latent parameters $\pi, u, w, \alpha, \sigma$, and $p$ by maximizing Eq.~\eqref{ELL} with an iterative procedure termed as E-steps and M-steps.

\subsubsection{E-step}

In the E-step, $p_{mi}(s)$ is updated under given $\pi_s,u,w,\alpha_m,\sigma$.
It is expressed as
\begin{align}
	p_{mi}(s) = \frac{\pi_s l(m,i,s)}{\sum_{r \in \{\F,\C\}} \pi_r l(m,i,r)},	\qquad \forall s \in \{\F, \C\},~\forall i \in I_m,~\forall m \in M.
\end{align}

\subsubsection{M-Step}

\def\alphamin{0}%\alpha_{\min}

In the M-step, $\pi_s,u,\{\alpha_m\},\{\sigma_s\}$ is updated such that the likelihood \eqref{ELL} is maximized under given $p_{mi}(s)$.
Specifically, it is expressed as
\begin{subequations}
\begin{align}
	\maxm_{\pi, u, \{\alpha_m\}, \{\sigma_s\}}~& E[LL(q, k | \pi, u, \{\alpha_m\}, \{\sigma_s\}, \{p_{mi}(s)\})] = \sum_m \sum_i \sum_s p_{mi}(s) \(\ln \pi_s + \ln l(m,i,s)\),		\\
	\text{s.t.~} & u_{\min} \leq u \leq u_{\max}, 	\label{u_const}\\
	 & \alphamin \leq \alpha_m, \qquad \forall m,\\
	 & \sum_s \pi_s = 1,
\end{align}		\label{mstepopt}%
\end{subequations}%
where $u_{\min}$ and $u_{\max}$ are non-negative constants.
In order to solve problem \eqref{mstepopt}, the method of Lagrange multiplier is adopted.
The Lagrangian for the problem is defined as
\begin{align}
	{\cal L}(q, k | \pi, u, \{\alpha_m\}, \{\sigma_s\}, \{p_{mi}(s)\}, \{\lambda\})
	&= \sum_m \sum_i \sum_s p_{mi}(s) \(\ln \pi_s + \ln l(m,i,s)\)		\notag\\
	&\qquad - \lambda_{\pi s} \(\sum_s \pi_s-1 \)		\notag\\
	&\qquad - \lambda_{u\min}(u_{\min}-u)-\lambda_{u\max}(u-u_{\max})		\notag\\
	&\qquad + \sum_m \lambda_{\alpha_m \min}\alpha_m
	%&\qquad - \sum_m \lambda_{\alpha_m \min}(\alpha_{\min}-\alpha_m)		%if alpha_min is non-zero
%
%	&= \sum_m \sum_i p_{mi}(\F) \(\ln \pi_\F - \ln\(2 \pi \sigma^2\) - \frac{1}{2\sigma^2} (q_{mi}-u k_{mi})^2\)	\notag\\
%	&\qquad+ \sum_m \sum_i p_{mi}(\C) \(\ln \pi_\C - \ln\(2 \pi \sigma^2\) - \frac{1}{2\sigma^2} (q_{mi}-\alpha_m + w k_{mi})^2\)	\notag\\
%	&\qquad- \lambda \(\sum_s \pi_s-1 \)
		\label{Lag}
\end{align}
where each $\lambda$ with subscript denotes a Lagrangian multiplier corresponding to each constraint in problem \eqref{mstepopt}.

By adopting the standard procedure that exploits the Karush--Kuhn--Tucker condition \citep{Karush1939kkt, Kuhn1951kkt}, this problem is analytically solved.
The solution is summarized as follows:
\begin{align}
	u=\left\{\begin{array}{ll}
		\tilde{u} & \text{if } u_{\min} < \tilde{u} < u_{\max}		\\
		u_{\min} & \text{if }\tilde{u} \leq u_{\min}		\\
		u_{\max} & \text{if }\tilde{u} \geq u_{\min},
	\end{array}\right.
\end{align}
with
\begin{align}
	\tilde{u} = \frac{\sum_m \sum_i p_{mi}(\F) q_{mi} k_{mi}/r_m^2}{\sum_m \sum_i p_{mi}(\F) k_{mi}^2/r_m^2},	\label{lag_u}
\end{align}
and
\begin{align}
	\alpha_m=\left\{\begin{array}{ll}
		\tilde{\alpha}_m & \text{if } \alphamin < \tilde{\alpha}_m		\\
		\alphamin & \text{if }\tilde{\alpha}_m \leq \alphamin,
	\end{array}\right. \qquad \forall m
\end{align}
with
\begin{align}
	\tilde{\alpha}_m = \frac{\sum_i p_{mi}(\C)(q_{mi}+wk_{mi})}{\sum_i p_{mi}(\C)},		\qquad \forall m
\end{align}
and
\begin{align}
	&\sigma_\F^2 = \frac{\sum_m\sum_i \( p_{mi}(\F)(q_{mi}-uk_{mi})^2 / r_m^2\)}{\sum_m\sum_i p_{mi}(\F)},		\\
	&\sigma_\C^2 = \frac{\sum_m\sum_i \( p_{mi}(\C)(q_{mi}-\alpha_m+wk_{mi})^2 / r_m^2\)}{\sum_m\sum_i p_{mi}(\C)},		\\
	&\pi_s = \frac{\sum_m \sum_i p_{mi}(s)}{\sum_{r \in \{\F, \C\}} \sum_m \sum_i p_{mi}(r)},	\qquad \forall s \in \{\F, \C\}.		\label{lag_pi}
\end{align}
Notice that the right-hand side of Eq.~\eqref{lag_u} is indefinite if $p_{mi}(\F)=0$ holds for all $m,i$; in other words, it is not possible to determine the free-flow speed if data are not observed from the free-flowing regime.
This corresponds to condition (i) in Corollary \ref{coro_fd_identify_triangle}.

\subsection{Summary of the algorithm}\label{sec_algo}

Input for the proposed algorithm is as follows:
\begin{itemize}
	\item Trajectories of multiple probe vehicles
	\item Jam density: $\kj$
	\item Lower and upper bounds for free-flow speed: $u_{\min}$ and $u_{\max}$
	\item Threshold for near-steady state $\theta_\text{steady}$
	\item Cleaning parameter for PTSs based on Corollary \ref{coro_fd_identify_triangle}: $\theta_\text{corr}$, $\theta_\text{std}$.
	\item Parameter sets determining shape of time--space region $\aa$: $\Delta m$, $\Delta t$, and $\phi$
	\item Initial state of the EM algorithm: $p_{mi}(s)$
	\item Threshold for the convergence of the EM algorithm: $\varepsilon$
\end{itemize}

Main output of the proposed algorithm is as follows:
\begin{itemize}
	\item Free-flow speed: $u$
	\item Backward wave speed: $w$
	\item Standard deviation of free-flowing flow from mean FD: $\sigma_\F$
	\item Standard deviation of congested flow from mean FD: $\sigma_\C$
	\item Number of vehicles between the probe pairs: $c_m$ 
\end{itemize}

The entire procedure of the proposed algorithm is summarized as follows:
\begin{description}
	\item[Step 1] Near-steady PTS set construction step
	\begin{description}
		\item[Step 1.1] Construct raw PTS set by calculating PTSs for all $m_+ \in \PP$, $m_-=m_--\Delta m$, $\Delta m \in \Delta M$, $t \in T$, $\Delta t \in \Delta T$, and $\phi \in \Phi$ based on the given probe vehicle data and definition \eqref{pts_def} and \eqref{adef}, where $\Delta M$, $\Delta T$, and $\Phi$ denote sets of given parameter values.
		\item[Step 1.2] Construct near-steady PTS set by removing specific PTSs from the raw PTS set. 
			Specifically, remove non-steady PTSs based on \equref{statidentify} with given $\theta_\text{steady}$ and non-appropriate PTSs that do not satisfy Corollary \ref{coro_fd_identify_triangle} based on given $\theta_{\text{corr}}$ and $\theta_\text{std}$.
	\end{description}
	\item[Step 2] Backward wave speed estimation step
	\begin{description}
		\item[Step 2.1] Estimate the backward wave speed $w$ by performing a linear regression on PTSs with $v\leq u_{\min}$ in the near-steady PTS set.
	\end{description}
	\item[Step 3] Free-flow speed estimation step
	\begin{description}
		\item[Step 3.1] Define the likelihood function \eqref{ELL}. 
			Give an initial state on $p_{mi}(s)$.
		\item[Step 3.2] Update $\pi_s,u,\{\alpha_m\},\{\sigma_s\}$ by executing the M-step.
		\item[Step 3.3] Update $p_{mi}(s)$ by executing the E-step.
		\item[Step 3.4] Check the convergence: if the change rate of all the parameters is lower than a given threshold $\varepsilon$, then it converges.
			If it converges, output the current parameter values and halt the algorithm.
			If not, go to Step 3.2.
	\end{description}
\end{description}

\subsection{Discussion}\label{sec_algo_disc}

The proposed method estimates mean free-flow speed and backward wave speed of a triangular FD along with other auxiliary variables. 
Two of the auxiliary variables correspond to the standard deviation of flow from the mean FD; therefore, the method also captures the stochasticity of the FD.
The method statistically estimates an FD based on data collected by multiple pairs of probe vehicles whose behavior may be heterogeneous as in general traffic.
As long as the number of probe vehicles are sufficient and they are randomly sampled, the estimation results can be considered as a representative of the whole traffic.

Although the method requires a few input parameters, their values can be determined easily (i.e., no need of fine tuning) and the estimation result would be insensitive to most parameters with the exception of $\theta_\text{steady}$ and $\kappa$.
The parameter $\theta_\text{steady}$, threshold for near-steady state identification, must be important for the proposed method, since it is relevant to the original definition of the FD.
This issue is investigated by performing a sensitivity analysis in Section \ref{sec_val_res}.
The jam density could be determined reasonably as discussed in Section \ref{sec_discussion}; moreover, even if the jam density is not known, the free-flow speed and the backward wave speed are estimated as discussed in Section \ref{sec_theory}.
The other parameters are not essential.
Any combinations of $\Delta m$, $\Delta t$, $\phi$, $\theta_\text{corr}$, and $\theta_\text{std}$ are acceptable as long as it produces sufficient number of near-steady PTSs from a given probe vehicle dataset such that the conditions (i) and (ii) of Corollary \ref{coro_fd_identify_triangle} are satisfied.
It is not necessary for the upper and lower bounds for free-flow speed, $u_{\max}$ and $u_{\min}$, to be close to the actual free-flow speed.
The upper bound $u_{\max}$ is only required to eliminate non-realistic local optima with unrealistically high $u$ (e.g., over 1000 km/h) as discussed subsequently in this section.
The lower bound $u_{\min}$ is required to determine congested traffic speed that exhibits significantly slower speed than the free-flow speed; therefore low values (e.g., 60 km/h for a highway) are acceptable.

The objective function of the problem \eqref{quasiL} is not necessarily convex nor unimodal.
Thus, local optima that does not represent an FD may exist.
For example, if the upper limit for free-flow speed \eqref{u_const} does not exist, obvious local optima are obtained at $u$ with unrealistically high speeds, such as 1000 km/h, which classifies all the PTSs belonging to congested regime.
The constrain \eqref{u_const} is useful to eliminate such local optima.

The proposed method estimates $u$ and $w$ separately although it is possible to formulate a joint estimation problem of $u$, $w$, and other auxiliary variables as a single optimization problem similar to \eqref{mstepopt} \citep{Seo2017probefd_procedia, kawasaki2017probefd}.
However, the separate estimation approach is adopted for the sake of empirical performance, because it was found that a joint estimation approach is difficult to solve stably due to numerous number of local optima, originated from fluctuations in actual traffic \citep{kawasaki2017probefd}.
Note that $u$ cannot be estimated by linear regression as in $w$; because it is not possible to extract data from free-flow regimes without knowing a precise value of $u$ in prior.

Reliability of an estimate of the proposed method can be assessed by a bootstrap method \citep[c.f.,][]{Bishop2006en}, in which the method is repeatedly executed with re-sampled data in Monte Carlo simulation manner.
Note that standard, simpler criteria such as the likelihood ratio and the Akaike information criteria cannot be applied to assess the reliability of the method with different parameter values, especially $\theta$.
This is because both of the number of model parameters and the number of samples depend on the value of $\theta$.

It must be noted that the proposed near-steady PTS extraction method is not theoretically perfect, since it only considers speed of probe vehicles.
In fact, as long as only probe vehicle data is used, it is not possible to exactly identify steady state; because behavior of non-probe vehicle does not necessarily influences probe vehicles.
Therefore, from practical perspective, it would be important to have an acceptable extraction method.
As we will see in Section \ref{sec_validation}, the proposed extraction method is acceptable in terms of its empirical performance.

The relation between the proposed method and the manual inference method proposed by Appendix in \citet{Herrera2010probe} is as follows.
They share a few of the ideas: that is, they both estimate triangular FD parameters, namely $u$ and $w$, based on probe vehicle data and a given jam density.
Meanwhile, essential differences exist.
First, the proposed method is based on the result of more general identifiability of an FD as discussed in Section \ref{sec_theory}.
Thus, the proposed method does not rely on the detection of a shockwave nor the detection of saturated free-flowing traffic (which are used by \citet{Herrera2010probe}), which are special phenomena and difficult to automatically detect by using probe vehicles.
Furthermore, the proposed method is a computable algorithm where discrepancies between the LWR theory and actual traffic are considered based on the steadiness of traffic and the noise terms.

\section{Empirical validation}\label{sec_validation}

In this section, validation results of the estimation algorithm based on real-world data are presented.

\subsection{Datasets}

We used traffic data collected at approximately 4 km length section in an inter-city highway near Tokyo, Japan (i.e., the section between Kawasaki Junction and Yokohama--Aoba Interchange in Tomei Expressway).
In this section, on-ramps, off-ramps, and merge/diverging sections are absent.
The number of lanes is three, and the speed regulation was 100 km/h.
According to the statistics, congestion occurs frequently during weekends and holidays mainly due to tourists in this section.
The data collection duration was weekends and holidays from April 2016 to September 2016, 61 days in total.

As probe vehicle data, we used data collected by Fujitsu Traffic \& Road Data Service Limited.
The data were collected by connected vehicles mainly consisting of logistic trucks and vans that were connected to the Internet for fleet management purposes.
The data included map-matched location information of each vehicles for every 1 s.
Probe vehicle data collected on a day (April 9, 2016) are illustrated in \figref{data_traject} as time--space diagrams; and it indicates that some probe vehicles slowed down due to congestion between 9 and 11 a.m.
The number of trips performed by probe vehicles during the aforementioned period was 23 103 trips.
This corresponds to mean headway time between two consecutive probe vehicles 3.8 min.
However, many of probe vehicles avoided congested time periods (this is confirmed by \figref{data_traject}); therefore, headway time in congested traffic should exceed the value.

\begin{figure}[htb]
	\centering
	\includegraphics[width=0.99\hsize]{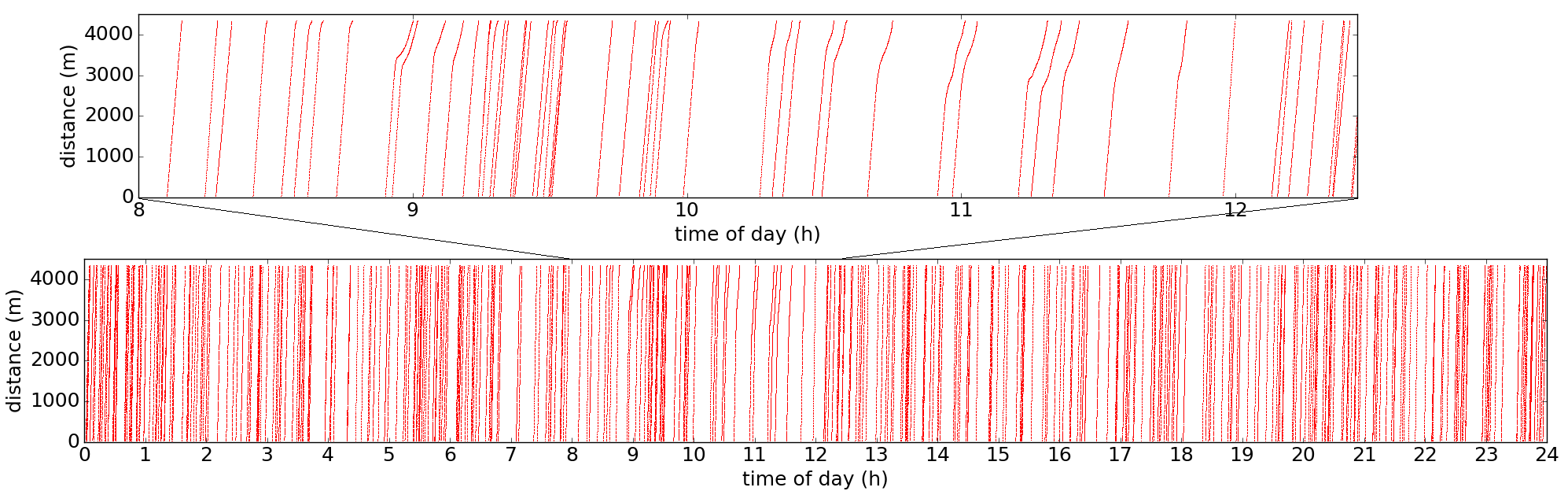}
	\caption{Probe vehicle trajectories on a day.}
	\label{data_traject}
\end{figure}

In order to validate the estimation results, loop detector data collected by Central Nippon Expressway Company Limited were used as a reference.
The detector is located in the middle of the section.
It measures vehicle type-specific flow and occupancy for every 5 min.
Density was obtained from the data by using the method proposed by \citet{cassidy1997density}.
The flow--density plot based on the detector data is shown in \figref{data_flowdensity}.
A clear triangular shape is observed; this is consistent with the existing empirical results \citep{cassidy1997density, Yan2018stationary}.
According to the plot, the flow capacity was approximately 1500 veh/h and the jam density was approximately 100 veh/km.
These values are slightly lower than those of usual highway.
This might be due to the property of drivers during the period in which most of them were potentially unexperienced drivers (i.e., sightseeing travelers in weekends) and/or possible miss detection by the loop detector.
Nevertheless, the values of the free-flow speed and the backward wave speed seems reasonable; therefore, this data can be considered as useful to validate the proposed method.

\begin{figure}[htbp]
	\centering
	\includegraphics[clip, width=0.6\hsize]{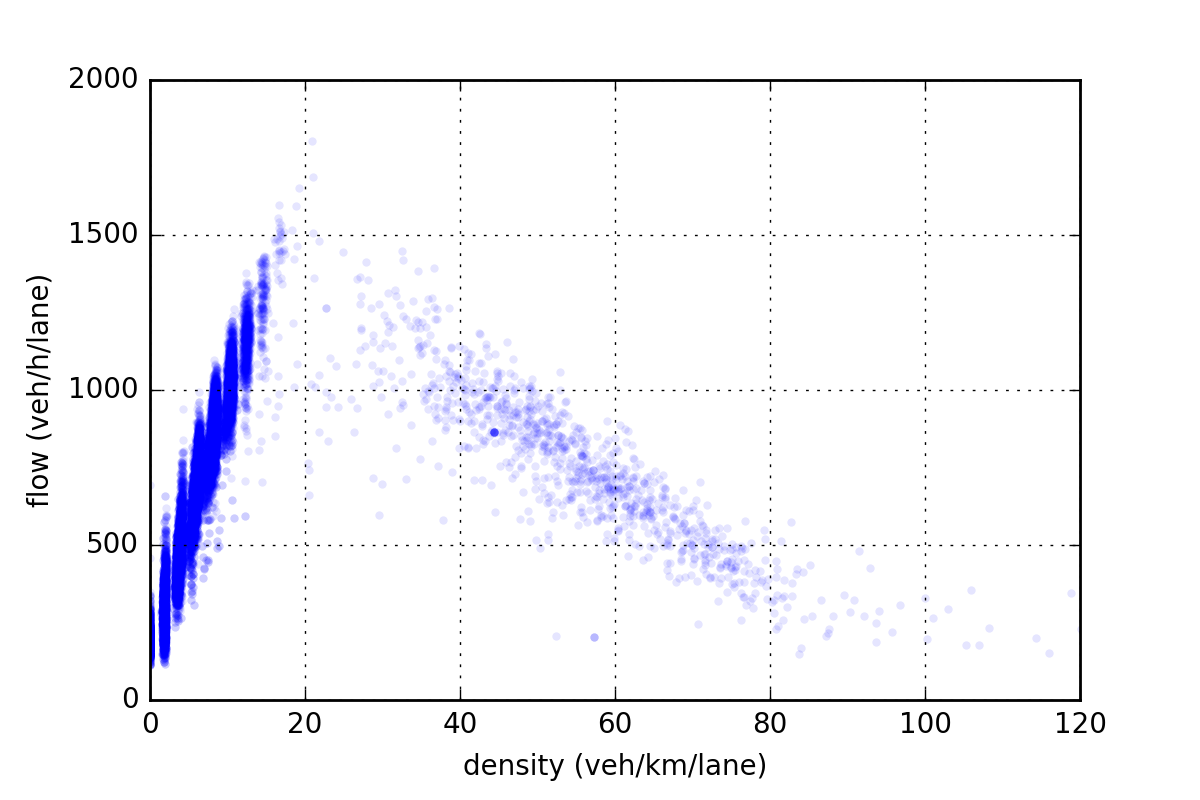}
	\caption{Reference flow--density relation based on the detector data.}
	\label{data_flowdensity}
\end{figure}

A comparison between speed measured by probe vehicles and detector for each time period is shown in \figref{data_comparison}.
Essentially, the probe vehicle data exhibits a tendency similar to that of the detector data.
However, in the high-speed regime, probe vehicle data tend to be slightly slower than the detector data.
The bias is potentially because the probe vehicles consisted of logistic trucks and vans that are typically slower than average cars.
From these results, we conclude that the probe vehicle data and detector data were sufficiently consistent in terms of speed measurement.

\begin{figure}[htbp]
	\centering
	\includegraphics[clip, width=0.6\hsize]{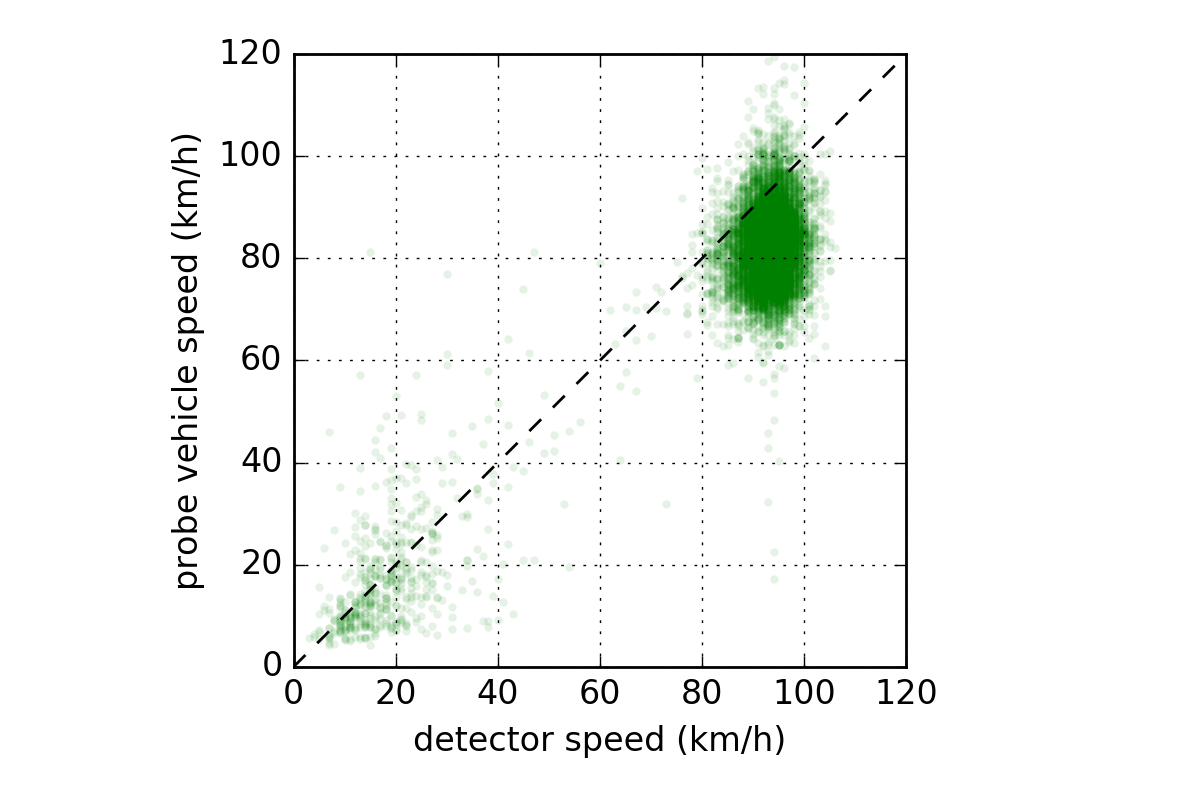}
	\caption{Comparison between speed measured by probe vehicles and detector.}
	\label{data_comparison}
\end{figure}

\subsection{Parameter setting}

The time--space regions $\aa$ were constructed based on all the combinations $(\Delta m, \Delta t, \phi)$ of the following values: $\Delta m \in \{1,2,3,4\}$, $\Delta t \in \{1,2,3,4\}$ (s), $\phi \in \{5,10,15,20,30\}$ (km/h).
The threshold for near-steady state identification, $\theta_\text{steady}$, was selected from $0.005$ to $0.3$. 
We performed a sensitivity analysis on this parameter by selecting each of them.
The cleaning parameters for PTSs were set as $\theta_\text{corr}=-0.8$ and $\theta_\text{std}=1$ (km/h).
The upper and lower bound for free-flow speed were set as $u_{\min}=60$ (km/h) and $u_{\max}=120$ (km/h), respectively, and they are considered as reasonably mild constraints for a highway with 100 km/h speed regulation.
The jam density $\kappa$ was set to $100$ veh/km based on \figref{data_flowdensity}.
The initial state of the EM algorithm is $p_{mi}(\F)=1$ and $p_{mi}(\C)=0$ for $m$ and $i$ with $v_m(\aa_{mi}) \leq u_{\min}$, and $p_{mi}(\F)=1$ and $p_{mi}(\C)=0$ for $m$ and $i$ with $v_m(\aa_{mi}) > u_{\min}$.
The convergence check parameter of the EM algorithm, $\varepsilon$, was set to $0.01$.

\subsection{Estimation results}\label{sec_val_res}

First, an estimation result under a moderate setting, $\theta_\text{steady}=0.15$, is presented.
Given the parameter setting, 1762 probe pairs produced multiple near-steady PTSs that appeared to satisfy the conditions in Corollary \ref{coro_fd_identify_triangle}, and the total number of PTSs was 437 337.
The estimated values were free-flow speed $u=81.2$ (km/h) and backward wave speed $w=17.44$ (km/h).
The values of estimates are summarized in \tabref{esti_summary} along with other parameter settings.

The estimation result with $\theta_\text{steady}=0.15$ is illustrated in \figref{esti_fd_150}.
The solid red line denotes mean estimates for $u$ and $w$, the dashed lines represent the mean plus/minus the standard deviations $\sigma_\F$ and $\sigma_\C$ of the estimates, and the blue dots represent near-steady traffic states extracted from the detector data (by using a method similar to \citet{Cassidy1998bivariate}) for reference.
The mean estimate indicates good agreement with detector data.
The estimated free-flow speed appears slightly slower than that of detector data.
This is likely due to the biased speed of probe vehicles.\footnotemark{}
Nevertheless, it is noteworthy that many of detector-measured traffic states fall within the standard deviation.
\footnotetext{
	According to \figref{data_comparison}, mean speed of probe vehicles in free-flowing regime is roughly 83 km/h (which is similar to the estimated $u$), whereas detectors measurement is roughly 95 km/h.
}%

All the estimation results with different $\theta_\text{steady}$ are summarized in \tabref{esti_summary}.
Based on these results, sensitivity on $\theta_\text{steady}$ is investigated.
With respect to the estimated $u$ and $w$, similar results were obtained in all cases with the exception of $\theta_\text{steady}=0.3$ and $\theta_\text{steady}=0.05$.
The case with $\theta_\text{steady}=0.3$ resulted in low $u$ and excessively high $w$.
This is due to the low steadiness in the PTSs filtered by $\theta_\text{steady}=0.3$; as a result, many of PTSs covered both the free-flowing and congested regime, making them inappropriate for the FD estimation.
The case with $\theta_\text{steady}=0.05$ resulted in slightly low $w$; and this suggests that the estimates are slightly unstable if the number of PTSs are small.
With respect to the estimated $\sigma_s$, a clear tendency was obtained: namely, they tended to decrease when the $\theta_\text{steady}$ decreased.
This is an expected result since low $\theta_\text{steady}$ tends to eliminate scattered PTSs in the flow--density plane that typically exhibits relatively low steadiness.

\begin{figure}[htbp]
	\centering
	\includegraphics[clip, width=0.8\hsize]{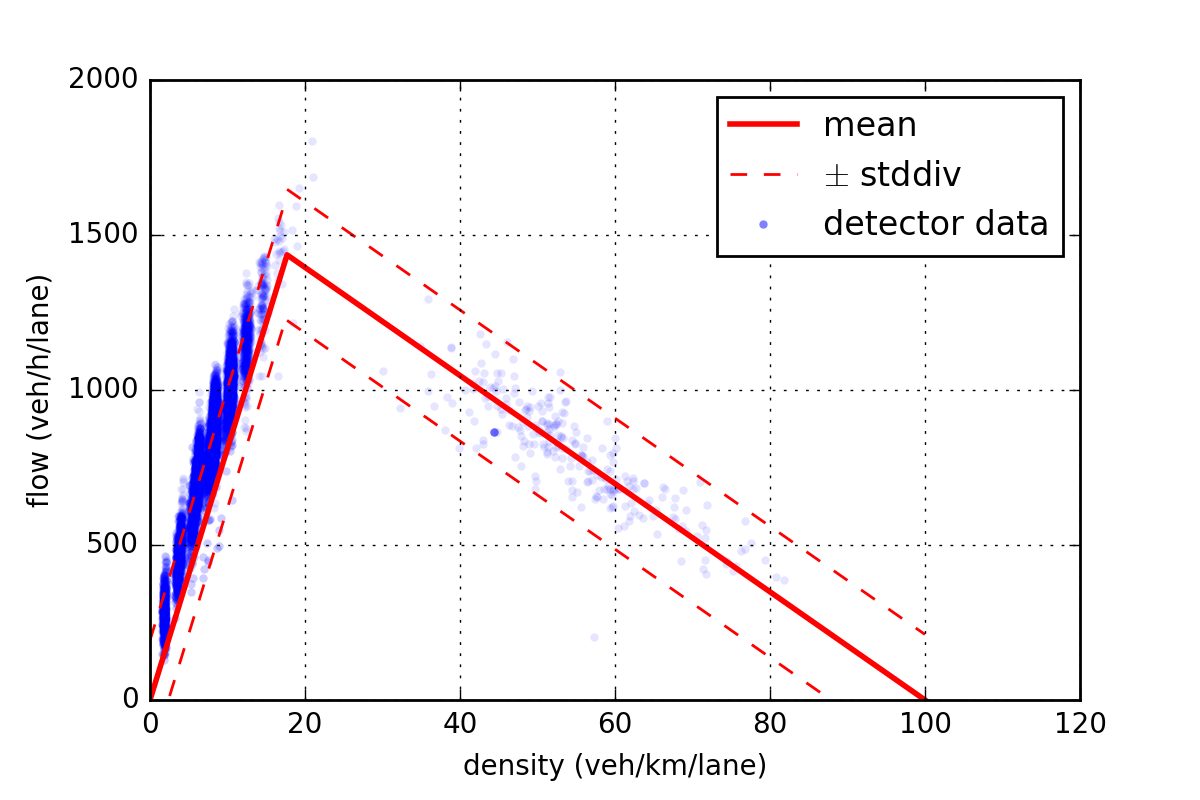}
	\caption{Estimated FD and near-steady detector data.}
	\label{esti_fd_150}
\end{figure}

\begin{table}[htbp]
	\centering
	\caption{Summary of estimation results.}
	\label{esti_summary}
	\begin{tabular}{rcrrlrrrr}
		\hline
		\multicolumn{1}{c}{Parameter} &  & \multicolumn{2}{c}{\# of near-steady data} &  & \multicolumn{4}{c}{Estimation results} \\
		\cline{1-1}\cline{3-4}\cline{6-9}
		\multicolumn{1}{c}{$\theta_\text{steady}$} &  & \multicolumn{1}{c}{Probe pairs} & \multicolumn{1}{c}{PTSs} &  & $u$ (km/h) & $w$ (km/h) & $\sigma_\F$ (veh/h) & $\sigma_\C$ (veh/h)\\
		\hline
		0.005 &  & 57 & 1 257 &  & 83.62  & 17.77  & 99.94  & 140.53  \\
		0.010 &  & 94 & 2 824 &  & 83.65  & 16.26  & 97.87  & 138.93  \\
		0.050 &  & 478 & 40 695 &  & 82.56  & 14.32  & 84.98  & 100.02  \\
		0.100 &  & 1172 & 229 892 &  & 81.94  & 16.02  & 166.82  & 181.30  \\
		0.150 &  & 1762 & 437 337 &  & 81.20  & 17.44  & 192.58  & 211.43  \\
		0.200 &  & 2344 & 654 732 &  & 80.55  & 17.63  & 197.48  & 191.81  \\
		0.300 &  & 3442 & 1 141 629 &  & 78.17  & 31.23  & 274.98  & 529.18  \\
		\hline
	\end{tabular}
\end{table}

\subsection{Discussion}

The results indicated that the proposed method estimates reasonable values of the free-flow speed and the backward wave speed, depending on the value of threshold for the steadiness $\theta_\text{steady}$.
According to \tabref{esti_summary}, the method derives fairly good estimates when $\theta_\text{steady}$ is less than $0.2$, which is a wide range for this type of variable, namely, coefficient of variation.
Conversely, as qualitatively expected, a case with excessively high $\theta_\text{steady}$ results in improper estimates because it did not properly remove non-steady traffic data.
The estimated mean FDs are slightly biased compared to the detector data, likely due to biases in probe vehicles' driving behavior.
This is one of the limitation of probe vehicle data.
Meanwhile, \ref{apx_general} suggested that the proposed methodology can estimate unbiased FD if probe vehicles are not biased.

In addition to the mean FD, the proposed method also estimates standard deviation in the FD, implying that it can capture stochasticity of the FD.
The standard deviation in free-flow regime is almost always lower than that of the congested regime; this is a reasonable result considering the nature of traffic flow.

With respect to the required amount of probe vehicle data, the results suggest that probe vehicle data with a few minutes of mean headway collected for few months are sufficient to estimate the FD.
If the penetration rate is lower than this case, the FD estimation is possible by collecting the data for a longer period.
On the contrary, if data collection duration was too short, we expect that the method may not estimate an appropriate FD, as probe vehicle data may not cover various traffic states.
An analyst needs choose an appropriate data collection duration to ensure that probe vehicle data does not violate conditions (i) and (ii) of Corollary \ref{coro_fd_identify_triangle}.
With respect to the section length required to estimate an FD, the results suggest that the length of a few kilometers is sufficient for the employed data.
It is difficult to precisely derive the minimum requirement because it depends on traffic conditions and the amount of data.
Generally, if traffic congestion occurs frequently or the amount of probe vehicle data is large, then some of the probe vehicles are likely to satisfy conditions (i) and (ii) of Corollary \ref{coro_fd_identify_triangle} in a short road section.
For this dataset, the minimum would be less than a kilometer, because some of the probe vehicles experience several traffic states at one kilometer length segment from the downstream end of this section as illustrated in \figref{data_traject}.

The optimization problem may involve unrealistic local optima as discussed in Section \ref{sec_algo_disc}.
However, in the empirical analysis, the proposed method always yielded solution with physically reasonable results with the exception of the excessively high $\theta_\text{steady}$ case.
This is potentially due to the separated estimation approach and the constraint on free-flow speed that we employed.

\section{Conclusion}\label{sec_conclusion}

A theory and an algorithm for fundamental diagram (FD) estimation by using trajectories of probe vehicles is presented.
They estimate FD parameters (i.e., free-flow speed and backward wave speed of a triangular FD) based on the jam density and time-continuous trajectories of randomly sampled vehicles.
The algorithm is developed to estimate a reasonable FD from actual noisy traffic data.
The algorithm was empirically and quantitatively validated based on real-world probe vehicle data on a highway.
The results suggested that the algorithm accurately and robustly estimates triangular FD parameters and captures the stochasticity of the FD.

The following future research directions are considerable.
First, it is important to develop a method that considers spatially heterogeneous FDs such that a bottleneck is endogenously detected.
This is possible by extending the method to estimate location-dependent $u$ and $w$ and global $c_m$.
Second, sophistication of the near-steady PTS extraction method is valuable.
For example, consideration of duration of near-steady state would be useful.
However, too strict criteria will results in the shortage of data; thus, careful design is required.
Third, extension to other FD functional forms is desirable.
\ref{apx_general} presents preliminary results.
Finally, it is considerable to incorporate of a jam density inference method based on remote sensing by using satellites (c.f., Section \ref{sec_discussion}), in order to make the method independent of exogenous knowledge.
For this purpose, the ``small satellites'' that attracted significant attention recently \citep{sandau2010satelliteb} will be useful owing to their flexible and low-cost sensing capability.
The latter direction is now being investigated by the authors \citep{Seo2018satellite_procedia}.

%%%%%%%%%%%%%%%%%%%%%%%%%%%%%%%%%%%%%%%%%%%%%%%%%%%%%%%%%%%%%%%%%%%%%%%%%%%%%%%%%%%%%%%%%%%%%%%%%%%%%%%%%%%%%%%%%%%%%%%%%%%%%%%%%
%%%%%%%%%%%%%%%%%%%%%%%%%%%%%%%%%%%%%%%%%%%%%%%%%%%%%%%%%%%%%%%%%%%%%%%%%%%%%%%%%%%%%%%%%%%%%%%%%%%%%%%%%%%%%%%%%%%%%%%%%%%%%%%%%
%%%%%%%%%%%%%%%%%%%%%%%%%%%%%%%%%%%%%%%%%%%%%%%%%%%%%%%%%%%%%%%%%%%%%%%%%%%%%%%%%%%%%%%%%%%%%%%%%%%%%%%%%%%%%%%%%%%%%%%%%%%%%%%%%

\section*{Acknowledgments}

The probe vehicle data were provided by Fujitsu Traffic \& Road Data Service Limited and the detector data were provided by Central Nippon Expressway Company Limited.
The study was financially supported by the Japan Society for the Promotion of Science (KAKENHI Grant-in-Aid for Young Scientists (B) 16K18164 and Scientific Research (S) 26220906).
The authors would like to express their sincere appreciation for the help.

\appendix
\section{List of abbreviations and notations}\label{sec_notation}

Important abbreviations and notations that are used throughout this paper are listed below.

{\footnotesize
\begin{longtable}{ll}
	\hline
	FD & 	Fundamental diagram		\\
	LWR & 	Lighthill--Whitham--Richards		\\
	FIFO & 	First-in first-out		\\
	PTS & 	Probe traffic state		\\
	PFD & 	Probe fundamental diagram		\\
	$t$ &	 Time		\\
	$x$ &	 Location		\\
	$q$ &	 Flow		\\
	$k$ &	 Density		\\
	$v$ &	 Speed		\\
	$d_n(\AA)$	& Distance traveled by vehicle $n$ in time--space region $\AA$	\\
	$t_n(\AA)$	& Time spent by vehicle $n$ in region $\AA$	\\
	$S$ &	 Traffic state. $S \equiv (q,k)$		\\
	$Q$ &	 Density-to-flow FD function		\\
%	$Q_j$ &	 Function of $j$-th piecewise differentiable part of an FD		\\
%	$g$	&	 Number of parameters of an FD		\\
%	$g_j$ & Number of parameters of $j$-th piecewise differentiable part of an FD		\\
%	$h$	 &	 Number of piecewise differentiable parts of an FD		\\
	$\kj$ &	 Jam density		\\
	$u$ &	 Free-flow speed		\\
	$w$ &	 Backward wave speed		\\
	$\alpha$ &	 $y$-intercept of the congestion part of flow--density triangular FD. $\alpha \equiv w\kj$		\\
	$m$ &	 Probe pair	(i.e., pair of two probe vehicles)		\\
	$m_-$ &	 Preceding probe vehicle in pair $m$	\\
	$m_+$ &	 Succeeding probe vehicle in pair $m$	\\
	$c_m$ &	 Number of vehicles between probe vehicles in pair $m$ plus one		\\
	$\phi$ & A parameter (angle) used to define a shape of region $\aa$		\\
	$\Delta t$ & A parameter (width) used to define a shape of region $\aa$		\\
	$\alpha_m$ &	 $y$-intercept of the congestion part of flow--density triangular PFD of pair $m$. $\alpha_m \equiv w\kj/c_m$		\\
	$\theta_{\text{steady}}$ &	 Threshold for near-steady traffic detection		\\
	$i$ &	 Index for datum		\\
	$\aa_{mi}$ &	$i$-th time--space region between probe vehicles in pair $m$		\\
	$q_m(\aa_{mi})$ &	 Probe-flow of pair $m$ in region $\aa_{mi}$		\\
	$k_m(\aa_{mi})$ &	 Probe-density of pair $m$ in region $\aa_{mi}$		\\
	$v_m(\aa_{mi})$ &	 Probe-speed of pair $m$ in region $\aa_{mi}$		\\
	$S_m(\aa_{mi})$ &	 PTS of pair $m$ in region $\aa_{mi}$. $S_m(\aa_{mi}) \equiv (q_m(\aa_{mi}), k_m(\aa_{mi}))$		\\
	$q_{mi}$ &	 Probe-flow of datum $i$ of pair $m$. $q_{mi} \equiv q_m(\aa_{mi})$		\\
	$k_{mi}$ &	 Probe-density of datum $i$ of pair $m$. $k_{mi} \equiv k_m(\aa_{mi})$		\\
	$S_{mi}$ &	 PTS of datum $i$ of pair $m$. $S_{mi} \equiv (q_{mi}, k_{mi}) = S_m(\aa_{mi})$		\\
	$s$ &	 Variable representing regime of a traffic state. $s \in \{\F, \C\}$		\\
	$\F$ &	 Free-flowing regime		\\
	$\C$ &	 Congested regime		\\
	$\pi_s$ &	 Contribution of state $s$ in the mixed distribution		\\
	$z_{mis}$ &	 1 if $S_{mi}$ belongs to regime $s$, 0 otherwise		\\
	$p_{mi}(s)$ &	 Expectation of $z_{mis}$		\\
	\hline
\end{longtable}
}

\section{Cases with general continuous FDs}\label{apx_general}
\setcounter{figure}{0}

In this appendix, the applicability of the proposed methodology to FDs with general continuous functional forms is presented.

\subsection{Identifiability}\label{apx_formulation}

We overwrite an assumption in Section \ref{sec_assumption} to make the methodology more general.
Hereafter, the FD is assumed to be continuous.

Following the same idea described in Section \ref{sec_theory}, we have following proposition on the identifiability of continuous, differentiable FDs based on probe vehicle data:
\begin{proposition}\label{pro_fd_identify_gen_differentiable}
	A continuous, differentiable FD whose number of parameters is $b$ is identifiable if (i) $b$ or more different steady PTS data are available for a probe pair and (ii) the value of the jam density is known.
\end{proposition}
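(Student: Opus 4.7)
The plan is to mimic the two-stage argument used to establish Corollary~\ref{coro_fd_identify_triangle}: first recover the pair-specific PFD parameter vector $\tthh_m$ from the observed steady PTSs, then use the known jam density $\kappa$ to rescale the PFD into the FD. The starting point is Eq.~\eqref{pfd}, which says that every steady PTS datum $(q_{mi}, k_{mi})$ of probe pair $m$ satisfies $q_{mi} = Q(k_{mi}; \tthh_m)$; this yields one scalar equation per datum in the $b$ unknown components of $\tthh_m$.

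First I would observe that $b$ such equations in $b$ unknowns generically pin down $\tthh_m$: treating the map $\tthh_m \mapsto (Q(k_{m1}; \tthh_m), \dots, Q(k_{mb}; \tthh_m))$ as a system of residual equations, the implicit function theorem gives local uniqueness of $\tthh_m$ whenever the Jacobian with respect to $\tthh_m$ at the observed densities has full rank $b$. The hypothesis that the $b$ data are \emph{different} rules out the trivial degeneracy of repeated densities, which would collapse rows of this Jacobian; this is precisely the same role played by the "two different congested states" requirement in Corollary~\ref{coro_pfd_identify_triangle}. Once $\tthh_m$ is identified, the second stage is immediate from the geometric similarity encoded in Eq.~\eqref{fd-pfd}: the PFD is a $1/c_m$-scaling of the FD along the density axis, so the jam density of the PFD equals $\kappa/c_m$ and is a known function of the identified $\tthh_m$. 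Dividing the given $\kappa$ by this quantity recovers the integer spacing $c_m$, and undoing the scaling then converts $\tthh_m$ back into the FD parameter vector $\tthh$.

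The main obstacle is that, unlike the piecewise-linear triangular case where the system decouples neatly along the free-flowing and congested branches, the identifiability of an arbitrary continuous, differentiable $Q$ from $b$ distinct $(q,k)$ measurements is only local and conditional on a non-degenerate Jacobian. A fully rigorous version of the proposition should therefore either include a mild non-degeneracy hypothesis on $Q$ (so that distinct observed densities yield linearly independent rows of $\partial Q/\partial \tthh_m$) or state identifiability only up to a measure-zero set of pathological sample configurations. I expect the authors' proof to rely on exactly this kind of parameter-counting plus implicit-function argument, with the distinctness of the $b$ PTSs carrying the same "non-redundancy" weight that distinct congested observations carried in the triangular proof, and with the known jam density again providing the single scalar needed to pass from the PFD back to the FD.
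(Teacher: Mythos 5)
Your two-stage argument---first identifying the pair-specific PFD parameter vector $\tthh_m$ from the $b$ steady PTS equations $q_{mi}=Q(k_{mi};\tthh_m)$, then comparing the PFD's jam density $\kappa/c_m$ against the known $\kappa$ to recover $c_m$ and hence $\tthh$---is exactly the route the paper takes, whose proof compresses your first stage into the single assertion that identifiability is ``evident from the identifiability condition of continuous, differentiable functions'' and your second stage into ``comparing the jam density of the PFD and that of the FD.'' Your implicit-function-theorem and Jacobian non-degeneracy caveat is a legitimate fleshing-out of the paper's admittedly terse first step (and a fair criticism of the proposition's unconditional phrasing), not a different approach.
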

\begin{proof}
	The corresponding PFD of the probe pair is identifiable from the said PTS data; this is evident from the identifiability condition of continuous, differentiable functions.
	By comparing the jam density of the PFD and that of the FD, the value of the FD parameters can be derived.
\end{proof}

\subsection{Algorithm}\label{apx_method}

Now we consider a computational algorithm of FD estimation based on potentially noisy probe vehicle data as in Section \ref{sec_method}.
Following the same idea described in Section \ref{sec_method}, an FD estimation algorithm is constructed as follows:
\begin{description}
	\item[Step 1] Near-steady PTS set construction step. Construct a near-steady PTS set by Step 1.1 of the algorithm shown in Section \ref{sec_algo}.
	\item[Step 2] FD parameters estimation step. Solve the following residual minimization problem:
	\begin{subequations}
	\begin{align}
		\min_{\tthh, \{c_m\}}& \sum_{m \in M} \sum_{i \in I_m} \(c_m q_{mi}-Q(c_m k_{mi}; \tthh)\)^2,	\label{min_diff}\\
		\mathrm{s.t.~} & c_m \geq 0, \qquad \forall m,		\label{cm_heuristic}\\
		 & \mathrm{physical~constraints~on~} \tthh,
	\end{align}\label{apndx_prob}%
	\end{subequations}
		where $\tthh$ denotes the FD parameters vector, $(q_{mi}, k_{mi})$ denotes a near-steady PTS datum $i$ of probe pair $m$, and $c_m$ denotes number of vehicles between a probe pair $m$ plus one.
		Note that we use Eq.~\eqref{cm_heuristic} as a heuristic constraint on $c_m$, although true physical constraint is $c_m \geq 1$.
		This is because of a certain property of the optimization problem: this setting greatly increases the opportunity of finding physically reasonable optima by gradient methods.
		For the reason, see \citet{Seo2017probefd_procedia}.
\end{description}

Although the expression of this algorithm is simpler than the proposed method for the triangular FD case in Section \ref{sec_method}, the former may be more difficult to solve, as the latter leverages the piecewise linearity of the triangular FD.
Development of an efficient algorithm for general FDs is out of scope of this appendix.
Instead, in this appendix, problem \eqref{apndx_prob} is solved by a quasi-Newton method with multiple initial values, which is general but potentially computationally costly.

\subsection{Simulation-based evaluation}\label{apx_evaluation}

In order to evaluate the method, synthetic data generated by traffic simulation was used.
First, a functional form and parameters for an underlying FD was assumed: we used either of a triangular FD, Greenshield's FD ($Q_{GS}(k)=uk(1-k/\kappa)$), or Greenberg's FD ($Q_{GB}(k)=v_c k \ln\(\kappa/k\)$, where $v_c$ is a parameter) for this illustration purpose.
The values of parameters were $u=80$ (km/h), $w=15$ (km/h), $\kappa=200$ (veh/km), and $v_c=30$ (km/h).
Then, traffic states in 1 km homogeneous road section with various demand and supply conditions were generated by the underlying FD.
A flow-dependent perturbation was added to the generated traffic states to simulate system and observation noises.
Probe vehicles that record their location in every 1 seconds were randomly sampled with penetration rate of 5{\%} from the generated ground truth traffic, so that 50 probe pairs were constructed.
Finally, the FD estimation problem \eqref{apndx_prob} was solved using L-BFGS-B algorithm \citep{byrd1995lbfgsb} with 30 random initial values.

\begin{figure}[tb]
	\centering
	\subfloat[Triangular FD. Estimated values: $\hat{u}=80.2$ (km/h) and $\hat{w}=14.9$ (km/h)]{\includegraphics[width=0.3\hsize]{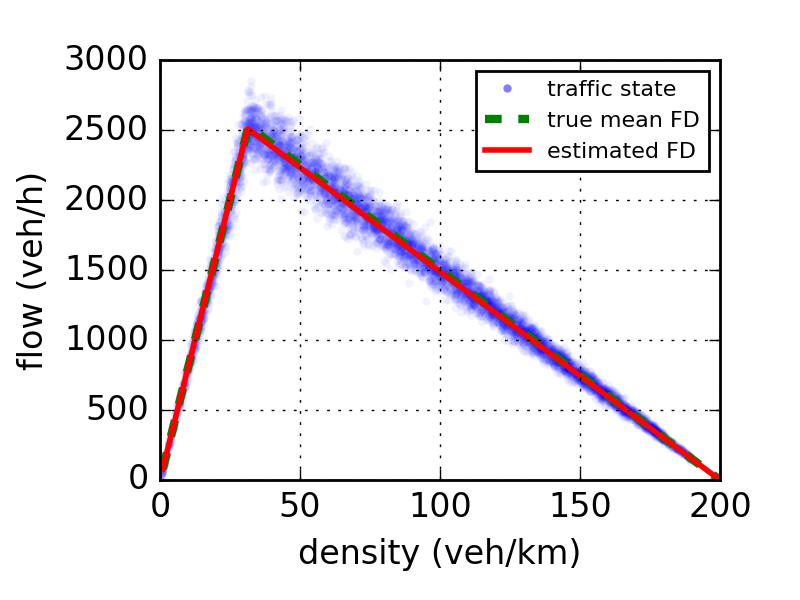}\label{resTR}}~
	\subfloat[Greenshield's FD. Estimated values: $\hat{u}=79.8$ (km/h)]{\includegraphics[width=0.3\hsize]{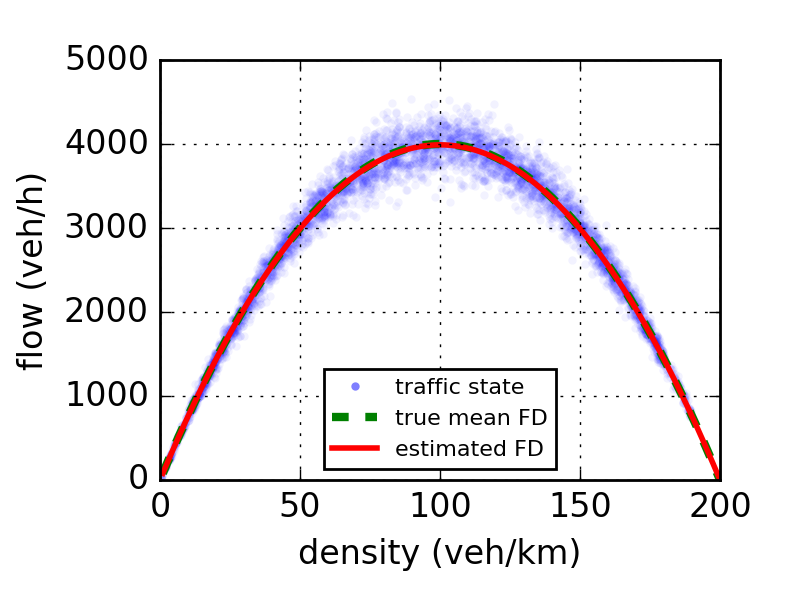}\label{resGS}}~
	\subfloat[Greenberg's FD. Estimated values: $\hat{v}_c=29.9$ (km/h)]{\includegraphics[width=0.3\hsize]{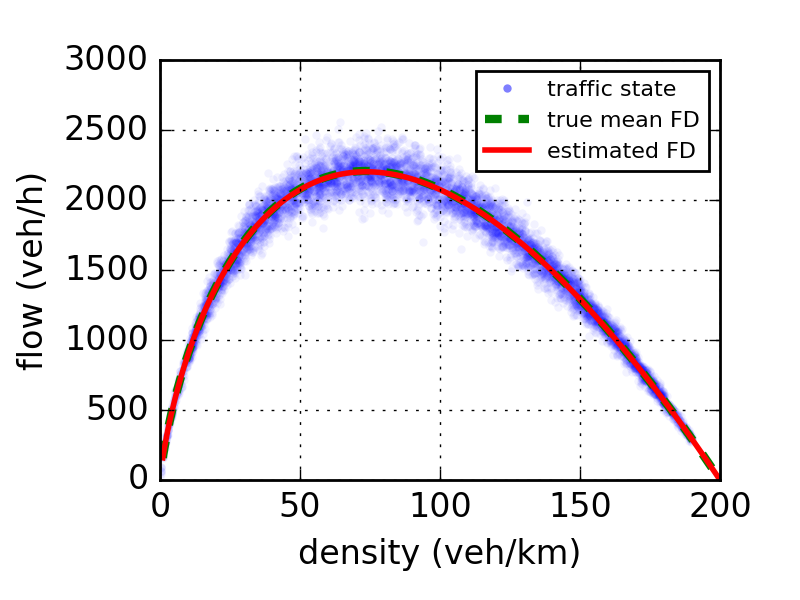}\label{resGB}}
	\caption{Simulation-based FD estimation results.}
	\label{appendix_res}
\end{figure}

Estimation results are summarized in Fig.~\ref{appendix_res}.
It is clear that estimated FDs almost perfectly coincide with the corresponding true mean FDs.
From these results, we conclude that the proposed methodology can estimate some continuous FDs in addition to the triangular FDs.
Additionally, the proposed methodology can estimate an unbiased triangular FD if probe vehicle data was not biased.
These findings complement the findings of the case study with real probe vehicle data in Section \ref{sec_validation}.

%\section*{References}

{\small
%\bibliography{ref_slim}

\bibliographystyle{elsarticle-harv}
}

\end{document}